\pgfplotsset{compat=1.14}
\newcommand{\textcomp}[1]{\textgoth{#1}}
\newcommand{\RCOP}{\textcomp{R}}
\newcommand{\ERCOP}{\hat{\RCOP}}
\newcommand{\DERCOP}{\ERCOP{}^1}
\newcommand{\RC}[3]{\RCOP_{\mathsmaller{#1}}\!\left(#2,\, #3\right)}
\newcommand{\DERC}[4]{\DERCOP_{\mathsmaller{#1}}\!\left(#2,\, #3,\, #4\right)}
\newcommand{\probop}[1]{\mathop{\mathbb{#1}}\displaylimits}
\newcommand{\Expop}{\probop{E}}
\newcommand{\Expwrt}[2]{\Expop_{#1}\left[#2\right]}
\DeclareMathOperator{\support}{support}
\newcommand{\rand}[1]{\mathscr{#1}}
\newcommand{\iid}{\text{i.i.d.}}
\newcommand{\distributed}{\sim}
\newcommand{\Rademacher}{\mathsmaller{\operatorname{Rademacher}}}
\newcommand{\lnp}[1]{\ln\!\left(#1\right)}
\newcommand{\abs}[1]{\left\lvert{} #1 \right\rvert{}}
\newcommand{\norm}[1]{\left\lVert{} #1 \right\rVert{}}
\DeclareMathOperator*{\argmax}{arg\,max}
\newcommand{\R}{\mathbb{R}}
\newcommand{\N}{\mathbb{N}}
\newcommand{\vsigma}{\bm{\sigma}}
\newcommand{\MDOP}{\textcomp{D}}
\newcommand{\EMDOP}{\hat{\MDOP}}
\newcommand{\EMD}[3]{\EMDOP_{\mathsmaller{#1}}\!\left(#2,\,#3\right)}
\newcommand{\MD}[3]{\MDOP_{\mathsmaller{#1}}\!\left(#2,\,#3\right)}
\newif\iflsymb
\newcommand{\slfrac}[2]{#1/#2}
\newcommand{\sabs}[1]{\lvert{} #1 \rvert{}}
\newcommand{\mydef}[1]{\textbf{#1}}
\newcommand{\amy}[1]{{\color{red}[\textsc{Amy}: \emph{#1}]}}
\algnewcommand{\Input}{\textbf{input:}~}
\algnewcommand{\Output}{\textbf{output:}~}
\algnewcommand{\Error}{\textbf{error}~}
\algnewcommand{\Continue}{\textbf{continue}~}
\algnewcommand{\Break}{\textbf{break}~}
\newcommand{\Samples}{\bm{X}}
\newcommand{\SamplePoint}{x} %
\newcommand{\NumberOfSamples}{m}
\newcommand{\NoiseCondition}{d}
\newcommand{\UtilityRange}{c}
\newcommand{\ConditionSpace}{\mathcal{X}}
\newcommand{\ConditionValue}{x}
\newcommand{\ConditionDistribution}{\rand{D}}
\newcommand{\SampleIndex}{j}
\newcommand{\TimeIndex}{t}
\newcommand{\SamplingSchedule}{\bm{M}}
\newcommand{\GameTuple}{\Gamma}
\newcommand{\ConditionalGame}[1]{\GameTuple_{#1}}    %
\newcommand{\InducedGame}[1]{\GameTuple_{#1}}        %
\newcommand{\EmpiricalGame}[1]{\hat{\GameTuple}_{#1}} %
\newcommand{\NumberOfPlayers}{|\SetOfPlayers|}
\newcommand{\SetOfPlayers}{P}
\newcommand{\PlayerIndex}{p}
\newcommand{\StrategySet}{S}
\newcommand{\StrategySetAlt}{T}
\newcommand{\Strategy}{s}
\newcommand{\StratProfile}{\bm{s}}
\newcommand{\StratProfileAlt}{\bm{t}}
\newcommand{\StratProfileSpace}{\bm{\StrategySet}}
\newcommand{\MixedStrategySet}{{\StrategySet}^{\diamond}}
\newcommand{\MixedStrategySetAlt}{{\StrategySetAlt}^{\diamond}}
\newcommand{\MixedStratProfileAlt}{\bm{\tau}}
\newcommand{\MixedStratProfileSpace}{\bm{\MixedStrategySet}}
\newcommand{\MixedStratProfileSpaceAlt}{\bm{\MixedStrategySetAlt}}
\newcommand{\Utility}{\bm{u}}
\newcommand{\BRG}{B}
\newcommand{\Neighborhood}{\mathcal{N}}
\newcommand{\SizeOfGame}[1]{N_{#1}}
\DeclareMathOperator{\Nash}{E}
\DeclareMathOperator{\Regret}{Reg}
\DeclareMathOperator{\Rationalizable}{Rat}
\DeclareMathOperator{\restrict}{restrict}
\DeclareMathOperator{\Image}{Image}
\newcommand{\GS}{\ensuremath{\operatorname{GS}}}
\newcommand{\PSP}{\ensuremath{\operatorname{PSP}}}
\newcommand{\Indices}{\bm{I}}
\newcommand{\SetOfFacilities}{E}
\newcommand{\NumberOfFacilities}{|\SetOfFacilities|}
\newcommand{\FacilityIndex}{e}
\newcommand{\FacilityCostFunction}{f}
\newcommand{\PlayerCost}{C}
\newcommand{\CongestionGame}{\mathcal{C}}
\newcommand{\CongestionSetOfStrategies}{\StrategySet}
\newcommand{\RandomGame}{\mathrm{RG}}
\newcommand{\RandomCongestionGame}{\mathrm{RC}}
\newcommand{\smallsup}[1]{\mathrel{\raisebox{0.25ex}{\ensuremath{\displaystyle\sup_{\mathsmaller{#1}}}}}}
\newtheorem{observation}{Observation}[section]
\begin{document}

\title[Learning games from data]{Learning Equilibria of Simulation-Based Games}
\author{Enrique Areyan Viqueria, Cyrus Cousins, Eli Upfal, Amy Greenwald}

\begin{abstract}
We tackle a fundamental problem in empirical game-theoretic analysis (EGTA), that of learning equilibria of simulation-based games. Such games cannot be described in analytical form; instead, a black-box simulator can be queried to obtain noisy samples of utilities. Our approach to EGTA is in the spirit of probably approximately correct 
learning. We design algorithms that learn so-called empirical games, which uniformly approximate the utilities of simulation-based games with finite-sample guarantees.
These algorithms can be instantiated with various concentration inequalities.
Building on earlier work, we first apply Hoeffding's bound, but as the size of the game grows, this bound eventually becomes statistically intractable; hence, we also use the Rademacher complexity.
Our main results state: with high probability, all equilibria of the simulation-based game are approximate equilibria in the empirical game (perfect recall); and conversely, all approximate equilibria in the empirical game are approximate equilibria in the simulation-based game (approximately perfect precision).
We evaluate our algorithms on several synthetic games, showing that they make frugal use of data, produce accurate estimates more often than the theory predicts, and are robust to different forms of noise.
\end{abstract}

\maketitle\theoremstyle{remark}

\section{Introduction}

Game theory is the \emph{de facto\/} standard conceptual framework used to analyze the strategic interactions among 
rational agents in multi-agent systems. At the heart of this framework is the theoretical notion of a game. In a game, each player (henceforth, an agent) chooses a strategy 
and earns a utility that in general depends on the profile (i.e., vector) of strategies chosen by all the agents.

The most basic representation of a game is the \mydef{normal form}, which can be visualized as a matrix containing all agents' utilities at all strategy profiles.
There are many techniques available to analyze normal-form games.
Analyzing a game means predicting its outcome, i.e., the strategy profiles that one can reasonably expect the agents to play.
Perhaps the most common such prediction
is that of \mydef{Nash equilibrium}~\cite{nash1950equilibrium}, where each agent plays a \mydef{best-response} to the strategies of the others: i.e., a strategy that maximizes their utility.
More generally, at an $\epsilon$-Nash equilibrium agents best respond to other agents' strategies up to an additive error of $\epsilon$.

This paper is concerned with analyzing games for which a complete and accurate description is not available.
While knowledge of the number of agents and their respective strategy sets is available, we do not assume {\it a priori} access to the game's utility function.
Instead, we assume access to a 
simulator from which we can sample noisy utilities associated with any strategy profile.
Such games have been called \mydef{simulation-based games}~\cite{vorobeychik2008stochastic} and
\mydef{black-box games}~\cite{picheny2016bayesian},
and their analysis is called \mydef{empirical game theoretic analysis} (EGTA)~\cite{wellman2006methods,jordan2007empirical}.
EGTA methodology has been applied in a variety of practical settings for which simulators are readily available, including trading agent analyses in supply chain management~\cite{vorobeychik2006empirical,jordan2007empirical}, ad auctions~\cite{jordan2010designing}, and
energy markets~\cite{ketter2013autonomous};
designing network routing protocols~\cite{wellman2013analyzing}; strategy selection in real-time games~\cite{tavares2016rock}; and the dynamics of reinforcement learning algorithms, such as AlphaGo \cite{tuyls2018generalised}.

The aim of this work is to design learning algorithms that can accurately estimate all the equilibria of simulation-based games. We tackle this problem using the \mydef{probably approximately correct} (PAC) learning framework~\cite{valiant1984theory}.
Our algorithms learn so-called \mydef{empirical games}~\cite{wellman2006methods}, which are estimates of simulation-based games constructed via sampling.
We argue that empirical games so constructed yield \mydef{uniform approximations} of simulation-based games, meaning all utilities in the empirical game tend toward their expected counterparts,
\emph{simultaneously}.

This notion of uniform approximation is central to our work: we prove that, when one game $\GameTuple$ is a uniform approximation of another $\GameTuple'$, all equilibria in $\GameTuple$ are approximate equilibria in $\GameTuple'$.
In other words, a uniform approximation implies perfect recall and \emph{approximately\/} perfect precision: all true positives in $\GameTuple$ are approximate equilibria in $\GameTuple'$, and all false positives in $\GameTuple'$ are approximate equilibria in $\GameTuple$.
Our learning algorithms, which learn empirical games that are uniform approximations of simulation-based games, thus well estimate the equilibria of simulation-based games.

Estimating \emph{all\/} the utilities in an empirical game is non-trivial, in part because of the \mydef{multiple comparisons problem} (MCP), which arises when estimating many parameters simultaneously, since accurate inferences for each individual parameter do not necessarily imply a similar degree of accuracy for the parameters in aggregate.  Controlling the \mydef{family-wise error rate} (FWER)---the probability that one or more of the approximation guarantees is violated---is one way to control for multiple comparisons.
In this work, we consider two approaches to FWER control in empirical games.
The first builds on classical methods, using concentration inequalities to first establish confidence intervals about each parameter \emph{individually}, and then applying a statistical correction (e.g., Bonferroni or \v{S}id\'{a}k) to bound the probability that all 
approximation guarantees hold \emph{simultaneously}, often incurring
looseness in one or both steps.
The second is based on Rademacher averages, which directly bound the error of all parameters simultaneously, by replacing the per-parameter concentration inequalities with a single concentration inequality, and an additional data-dependent term, thereby circumventing the looseness inherent in classical methods.

In addition to controlling the FWER in two ways, we also consider two learning algorithms.
The first, which we call \mydef{global sampling} (\GS), learns an empirical game from a static sample of utilities.  The second, \mydef{progressive sampling with pruning} (\PSP), samples dynamically.
Prior art~\cite{elomaa2002progressive,riondato2015mining, riondato2018abra} uses progressive sampling to obtain a desired accuracy at a fixed failure probability by guessing an initial sample size, computing Rademacher bounds, and repeating with larger sample sizes until the desired accuracy is attained.
Our work enhances this strategy with \emph{pruning}: at each iteration, parameters that have already been sufficiently well estimated for the task at hand (here, equilibrium estimation) are pruned, so that subsequent iterations of \PSP{} do not refine their bounds.  Pruning is compatible with both classical MCP correction methods and with Rademacher averages.
In both cases, pruning represents both a statistical and computational improvement over earlier progressive sampling techniques.

\if 0
Our learning algorithms can be instantiated with various concentration inequalities.  Building on earlier work~\cite{tuyls2018generalised}, we first apply Hoeffding's bound, but as the size of the game grows, this bound eventually becomes statistically intractable; hence, we also use Rademacher bounds.
\fi

We empirically evaluate both \GS{} and \PSP{} with a Bonferroni correction on randomly generated games and finite congestion games~\cite{rosenthal1973class}.
We show that they make frugal use of data, accurately estimate equilibria of games more often than the theory predicts, and are robust to different forms of noise.
We further show that \PSP{} can significantly outperform \GS, the current state-of-the-art~\cite{tuyls2018generalised}, relying on far fewer data to produce the same (and often times, better) error rates.

In the simulation-based games of interest here, a running assumption is that simulation queries are exceedingly expensive, so that the time and effort required 
to obtain sufficiently accurate utility estimates dwarves that of any other relevant computations, including equilibrium computations.  This assumption is reasonable in so-called \mydef{meta-games}~\cite{tavares2016rock,wellman2006methods}, where agents choice sets comprise a few high-level heuristic strategies, not intractably many low-level game-theoretic strategies.  Thus, our primary concern is to limit the need for sampling, while still %
well estimating a game.

\paragraph{Related Work}

The goal of empirical game-theoretic analysis is to solve games that are not directly amenable to theoretical analysis because they can be described only by repeatedly querying a simulator, and even then their description is necessarily noisy.
Most EGTA methods work by first estimating utilities to construct an empirical game, and then computing an approximate Nash equilibrium of that game.
One approach to estimating utilities is to sample a few strategy profiles, and then learn from those observations functions that generalize utilities to unseen data~\cite{vorobeychik2007learning,wiedenbeck2018regression}.
Another line of work proposes sampling heuristics based on, for example, information-theoretic 
principles~\cite{walsh2003choosing}, or variance reduction techniques in games with stochastic outcomes where agents employ mixed strategies~\cite{burch2016aivat}. 
To handle particularly complex games, hierarchical game reduction~\cite{wellman2006methods} can be used to reduce the size of the game. 
Recently, EGTA has been successfully applied in games with very large strategy spaces, like StarCraft~\cite{tavares2016rock}, by restricting their strategy spaces to small sets of predefined strategies (i.e., heuristics)---in other words, by solving them as meta-games.  

One distinguishing feature of our work \emph{vis \`a vis\/} much of the existing EGTA work is that we aim to estimate all equilibria of a simulation-based game, rather than just a single one (e.g., Jordan \emph{et al.}~\cite{jordan2008searching}).
Notable exceptions include Tuyls, \emph{et al.}~\cite{tuyls2018generalised}, Vorobeychik~\cite{vorobeychik2010probabilistic}, and Wiedenbeck \emph{et al.}~\cite{Wiedenbeck:2014:AGT:2615731.2616156}.
The first of these papers shows that all equilibria of a simulation-based game are also approximate equilibria in an empirical game: i.e., they establish perfect recall with finite-sample guarantees.
The second paper derives asymptotic results about the quality of the equilibria of empirical games: i.e., they establish perfect precision in the limit, as the number of samples tends to infinity.
Our main theorem unifies these two results, obtaining finite sample guarantees for both perfect recall and approximate precision.
We obtain our results in a manner that generalizes (in theory, if not in practice) the approach taken by Tuyls, \emph{et al.}~\cite{tuyls2018generalised}, and using Rademacher averages.

The third of the aforementioned works, Wiedenbeck \emph{et al.}~\cite{Wiedenbeck:2014:AGT:2615731.2616156}, is perhaps closest in spirit to our own.
There the goals, like ours, are to characterize the quality of the equilibria in empirical games, and to exploit statistical information to save on sampling.
Their methods employ bootstrapping~\cite{EfroTibs93}; as such, they do not immediately afford any theoretical guarantees.  Still, it could be of interest to test our methods against theirs, under the usual EGTA assumption that little to nothing is known about how the games' utilities are distributed; but at first blush, it seems likely that neither would dominate.  Bootstrapping methods usually yield 
stronger approximation guarantees,
but it is easy to construct examples where they are overly optimistic (e.g., when the sampling distribution is skewed, or in case there are outliers); thus, for applications that require a high degree of certainty, bootstrapping would likely not be the preferred method.

Sample complexity bounds are an essential tool in EGTA, as running the simulator is generally a computational bottleneck. In query complexity work (e.g., \cite{babichenko2016query,fearnley2015learning,jecmen2018bounding}), one seeks to bound the number of samples needed to approximate a mixed strategy Nash equilibrium. 
This line of work differs from our work in at least two critical aspects: (1) sampling is over mixed strategies, not noisy payoffs, and (2) one seeks a single equilibrium. Noisy payoffs may have little impact on their own, but when combined with seeking all equilibria, they are substantial, as we must account for multiple hypothesis testing (motivating our use of uniform convergence bounds). Moreover, since our methods estimate all equilibria, which essentially requires learning the entire game, lower bounds from the query complexity literature do not apply.  This difference is key; not only does estimating all equilibria yield a greater understanding of a game, single equilibria are not sufficient to estimate certain properties, such as the price of anarchy.

Rademacher averages were introduced to the statistical learning theory literature for supervised learning \cite{koltchinskii2001rademacher,bartlett2002rademacher}, but have also been used in unsupervised learning problems like mining frequent itemsets \cite{riondato2015mining}, and sampling problems like betweenness centrality approximation \cite{riondato2018abra}.  
We estimate Rademacher averages using Monte-Carlo simulations, as suggested by Bartlett and Mendelson \cite{bartlett2002rademacher}; doing so, yields very tight generalization bounds.  Although a small number of supervised learning papers, including learning decision tree prunings \cite{kaariainen2004selective},
take this approach, computation of the requisite suprema is often intractable, 
leading most to rely on looser analytical bounds instead.%
\footnote{Loose analytical bounds can be valuable, for the insight they provide into when the sample complexity is high or low.  We pursue this approach in Appendix~\ref{sec:comparingBounds}.}
In our application domain, namely EGTA as opposed to machine learning, simulations dominate run time, so effort that is spent computing more accurate bounds is well compensated for in the simulation savings that ensue.

\paragraph{Outline}
This paper is organized as follows. 
In \Cref{sec:approx},
we present our approximation framework, in which we show that equilibria in games are \emph{approximable}: i.e., they are approximately preserved by uniform game approximations.
In \Cref{sec:learn}, we present our learning framework, in which we show that empirical games uniformly 
converge to their 
expected counterparts with high probability, given finitely many samples.
We conclude that equilibria in simulation-based games are \emph{learnable} from finitely many samples: i.e., approximable with high probability.
In \Cref{sec:algos}, we present two algorithms that uniformly learn simulation-based games; the first is expensive but always
yields guarantees, while the second is more statistically efficient, but can exhaust its sampling budget and fail.
Finally, in \Cref{sec:expts}, we experiment with these algorithms, showing that they can indeed learn equilibria, often more efficiently than the theory suggests.

\section{Approximation Framework}
\label{sec:approx}

We begin by presenting standard game-theoretic notions.
We then introduce the notion of uniform approximation.
Given an approximation of one game by another, there is not necessarily a connection between their properties.
For example, there may be equilibria in one game with no corresponding equilibria in the other, as small changes to the utility functions can add or remove equilibria. 
Nonetheless, we show that finding the approximate equilibria of a uniform approximation of a game is sufficient for finding all the (exact) equilibria of the game itself.

\subsection{Basic Game Theory}

\begin{definition}[Pure Normal-Form Game]
A \mydef{normal-form game} $\GameTuple \doteq \langle \SetOfPlayers, \{ \StrategySet_\PlayerIndex \mid \PlayerIndex \in \SetOfPlayers \},
\Utility(\cdot) \rangle$
consists of a set of agents $\SetOfPlayers$,
with \mydef{pure strategy set}  $\StrategySet_\PlayerIndex$ available to agent $\PlayerIndex \in \SetOfPlayers$.  We define $\StratProfileSpace \doteq \StrategySet_1 \times \dots \times \StrategySet_{\NumberOfPlayers}$ to be the pure strategy profile space of $\GameTuple$, and then $\smash{\Utility : \StratProfileSpace \to \R^{\NumberOfPlayers}}$ is a vector-valued utility function (equivalently, a vector of $\NumberOfPlayers$ scalar utility functions $\Utility_\PlayerIndex$).  
\end{definition}

\if 0
Given such a normal-form game $\GameTuple$, we define its size $\SizeOfGame{\GameTuple} \doteq \NumberOfPlayers \prod_{\PlayerIndex=1}^{\NumberOfPlayers} \abs{\StrategySet_\PlayerIndex}$.  
\fi

Given a normal-form game $\GameTuple$, we denote by $\StrategySet_\PlayerIndex^\diamond$ the set of distributions over $\StrategySet_\PlayerIndex$; this set is called agent $\PlayerIndex$'s \mydef{mixed strategy set}. We define $\smash{\MixedStratProfileSpace = \MixedStrategySet_1 \times \dots \times \MixedStrategySet_{\NumberOfPlayers}}$, and then, overloading notation, we write $\Utility(\StratProfile)$
to denote the expected utility of a mixed strategy profile $\smash{\StratProfile \in \MixedStratProfileSpace}$.

A solution to a normal-form game is a prediction of how strategic agents will play the game. One solution concept that has received a great deal of attention in the literature is Nash equilibrium~\cite{nash1950equilibrium}, a (pure or mixed) strategy profile at which each agent selects a utility-maximizing strategy, fixing all other agents' strategies. In this paper, we are concerned with $\epsilon$-Nash equilibrium, an approximation of Nash equilibrium that is amenable to statistical estimation. 

\begin{definition}[Regret]
Given a game $\GameTuple$, 
fix an agent $\PlayerIndex$
and a mixed strategy profile $\StratProfile \in \MixedStratProfileSpace$.
We define $\MixedStratProfileSpaceAlt_{\PlayerIndex, \StratProfile} \doteq \{ \StratProfileAlt \in \MixedStratProfileSpace \mid \StratProfileAlt_q = \StratProfile_q, \forall q \neq \PlayerIndex \}$.
In words, $\MixedStratProfileSpaceAlt_{\PlayerIndex, \StratProfile}$ is the set of all mixed strategy profiles in which the strategies of all agents $q \ne \PlayerIndex$ are fixed at $\StratProfile_q$.
Agent $\PlayerIndex$'s \mydef{regret} at $\StratProfile$ 
is defined as:
$\smash{\Regret^{\diamond}_\PlayerIndex (\GameTuple, \StratProfile) \doteq \sup_{\StratProfile' \in \MixedStratProfileSpaceAlt_{\PlayerIndex, \StratProfile}} \Utility_{\PlayerIndex} (\StratProfile') - \Utility_{\PlayerIndex} (\StratProfile)}$.
By restricting $\StratProfile$ and $\MixedStratProfileSpaceAlt_{\PlayerIndex, \StratProfile}$ to pure strategy profiles,
agent $\PlayerIndex$'s \mydef{pure regret} $\Regret_\PlayerIndex (\GameTuple, \StratProfile)$ can be defined similarly.
\end{definition}

Note that $\Regret^{\diamond}_\PlayerIndex (\GameTuple, \StratProfile), \Regret_\PlayerIndex (\GameTuple, \StratProfile) \ge 0$, since agent $\PlayerIndex$ can deviate to any strategy $\Strategy' \in \StrategySet_\PlayerIndex$, including $\StratProfile_\PlayerIndex$ itself.
A strategy profile $\StratProfile$ that has regret at most $\epsilon \ge 0$, for all $\PlayerIndex \in \SetOfPlayers$, is an $\epsilon$-Nash equilibrium:

\begin{definition}[$\epsilon$-Nash Equilibrium]
Given $\epsilon \ge 0$,
a mixed strategy profile $\StratProfile \in \MixedStratProfileSpace$ in a game $\GameTuple$ is an $\epsilon$-\mydef{Nash equilibrium} if, for all $\PlayerIndex \in \SetOfPlayers$, 
$\Regret^{\diamond}_\PlayerIndex (\GameTuple, \StratProfile) \le \epsilon$.
At a pure strategy $\epsilon$-Nash equilibrium $\StratProfile \in \StratProfileSpace$, for all $\PlayerIndex \in \SetOfPlayers$,
$\Regret_\PlayerIndex (\GameTuple, \StratProfile) \le \epsilon$.
We denote by $\Nash^\diamond_\epsilon (\GameTuple)$ %
the set of mixed $\epsilon$-Nash equilibria, and by $\Nash_\epsilon (\GameTuple)$, 
the set of pure $\epsilon$-Nash equilibria. Note that $\Nash_\epsilon(\GameTuple) \subseteq \Nash_\epsilon^\diamond(\GameTuple)$.
\end{definition}

\subsection{Approximating Equilibria}

In this section, we show that equilibria can be approximated with bounded error, given only a uniform approximation. Specifically, our main theorem establishes perfect recall, in the sense that the approximate game contains all true positives: i.e., all (exact) equilibria of the original game.
It also establishes approximately perfect precision, in the sense that all false positives in the approximate game are approximate equilibria in the original game.

We define the \emph{$\ell_{\infty}$-norm} between two compatible games, with the same agents sets $\SetOfPlayers$ and strategy profile spaces $\StratProfileSpace$, and with utility functions $\Utility$, $\Utility'$, respectively, as follows:%
\footnote{We use $\sup$ in this definition, rather than $\max$, because the space of mixed strategy profiles is infinite.}
\[
\norm{\GameTuple - \GameTuple'}_{\!\infty} \doteq \norm{\Utility(\cdot) - \Utility' (\cdot)}_{\!\infty} \doteq \smallsup{\PlayerIndex \in \SetOfPlayers, \StratProfile \in \StratProfileSpace} \lvert{\Utility_{\PlayerIndex} (\StratProfile) - \Utility'_{\PlayerIndex} (\StratProfile)}\rvert \enspace.
\]
While the $\ell_{\!\infty}$-norm as defined applies only to pure normal-form games, it is in fact sufficient to use this metric even to show that the utilities of mixed strategy profiles approximate one another.  We formalize this claim in the following lemma.

\begin{lemma}[Approximations in Mixed Strategies]
\label{lemma:mixed-approximation}
If $\GameTuple, \GameTuple'$ differ only in their utility functions, $\Utility$ and $\Utility'$, then
$\sup_{\PlayerIndex \in \SetOfPlayers, \StratProfile \in \MixedStratProfileSpace} \lvert{\Utility_\PlayerIndex(\StratProfile) - \Utility'_\PlayerIndex(\StratProfile)}\rvert = \norm{\GameTuple - \GameTuple'}_{\!\infty}$.
\end{lemma}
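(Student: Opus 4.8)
The plan is to exploit the fact that expected utility is \emph{multilinear} in the mixed strategies, so that the utility of any mixed profile is a convex combination of the utilities of pure profiles. The equality then reduces to two inequalities, one trivial and one following from this convexity observation, and combining them closes the argument.

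First I would dispatch the easy direction, $\sup_{\PlayerIndex \in \SetOfPlayers, \StratProfile \in \MixedStratProfileSpace} \abs{\Utility_\PlayerIndex(\StratProfile) - \Utility'_\PlayerIndex(\StratProfile)} \ge \norm{\GameTuple - \GameTuple'}_{\!\infty}$. This holds because every pure strategy profile is a degenerate mixed strategy profile, so $\StratProfileSpace \subseteq \MixedStratProfileSpace$; hence the supremum on the left, being taken over a superset of the index set, dominates the supremum over pure profiles that defines the $\ell_\infty$-norm. (This direction, incidentally, is why the definition of $\norm{\GameTuple - \GameTuple'}_{\!\infty}$ over pure profiles loses nothing.)

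For the reverse inequality I would expand the expected utility of a mixed profile $\StratProfile = (\StratProfile_1, \dots, \StratProfile_{\NumberOfPlayers})$ explicitly. By definition, for each agent $\PlayerIndex$,
\[
\Utility_\PlayerIndex(\StratProfile) = \sum_{\StratProfileAlt \in \StratProfileSpace} \Big( \prod_{q \in \SetOfPlayers} \StratProfile_q(\StratProfileAlt_q) \Big)\, \Utility_\PlayerIndex(\StratProfileAlt),
\]
and identically for $\Utility'$, where $\StratProfile_q(\StratProfileAlt_q)$ is the probability that $\StratProfile_q$ assigns to the pure strategy $\StratProfileAlt_q$. Writing $w_{\StratProfileAlt} \doteq \prod_{q \in \SetOfPlayers} \StratProfile_q(\StratProfileAlt_q)$ and noting that these weights are \emph{identical} across the two games (which differ only in their utility functions), the difference telescopes to
\[
\Utility_\PlayerIndex(\StratProfile) - \Utility'_\PlayerIndex(\StratProfile) = \sum_{\StratProfileAlt \in \StratProfileSpace} w_{\StratProfileAlt} \big( \Utility_\PlayerIndex(\StratProfileAlt) - \Utility'_\PlayerIndex(\StratProfileAlt) \big).
\]
Since $w_{\StratProfileAlt} \ge 0$ and $\sum_{\StratProfileAlt \in \StratProfileSpace} w_{\StratProfileAlt} = 1$, the right-hand side is a convex combination of pure-profile utility differences; the triangle inequality followed by bounding each term by its supremum yields $\abs{\Utility_\PlayerIndex(\StratProfile) - \Utility'_\PlayerIndex(\StratProfile)} \le \norm{\GameTuple - \GameTuple'}_{\!\infty}$. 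Taking the supremum over $\PlayerIndex \in \SetOfPlayers$ and $\StratProfile \in \MixedStratProfileSpace$ gives the reverse inequality, and the two bounds together establish the equality.

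The argument is essentially a one-line convexity observation, so I do not anticipate a substantive obstacle; the only point requiring care is the bookkeeping in the telescoping step, namely verifying that the mixing weights $w_{\StratProfileAlt}$ depend only on $\StratProfile$ and the shared strategy sets, not on the utility function, so that they factor out cleanly and cancel between $\Utility$ and $\Utility'$. This is exactly what the hypothesis that $\GameTuple, \GameTuple'$ differ only in their utility functions buys us.
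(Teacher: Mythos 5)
Your proof is correct and follows essentially the same route as the paper's: both expand the mixed-strategy utility as a convex combination of pure-profile utilities with weights independent of the utility function, bound the difference by the pure-profile supremum (the paper invokes H\"{o}lder's inequality where you use nonnegativity of the weights plus the triangle inequality, which amounts to the same thing), and close the equality by observing that pure profiles are degenerate mixed profiles.
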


\begin{proof}
For any agent $\PlayerIndex$ and mixed strategy profile $\MixedStratProfileAlt \in \MixedStratProfileSpace$, $\Utility_\PlayerIndex (\MixedStratProfileAlt) = \sum_{\StratProfile \in \StratProfileSpace} \MixedStratProfileAlt (\StratProfile) \Utility_\PlayerIndex (\StratProfile)$, 
where $\MixedStratProfileAlt (\StratProfile) = \prod_{\PlayerIndex' \in \SetOfPlayers} \MixedStratProfileAlt_{\PlayerIndex'} (\StratProfile_{\PlayerIndex'})$.
So, $\Utility_\PlayerIndex (\MixedStratProfileAlt) - \Utility'_\PlayerIndex (\MixedStratProfileAlt) = 
\sum_{\StratProfile \in \StratProfileSpace} \MixedStratProfileAlt (\StratProfile) (\Utility_\PlayerIndex (\StratProfile) - \Utility'_\PlayerIndex (\StratProfile)) \leq
\sup_{\StratProfile \in \StratProfileSpace} \sabs{\Utility_\PlayerIndex (\StratProfile) - \Utility'_\PlayerIndex (\StratProfile)}$, by H{\"o}lder's inequality.
Hence,
$\sup_{\MixedStratProfileAlt \in \MixedStratProfileSpace} \sabs{\Utility_\PlayerIndex (\MixedStratProfileAlt) - \Utility'_\PlayerIndex (\MixedStratProfileAlt)} \leq
\sup_{\StratProfile \in \StratProfileSpace} \sabs{\Utility_\PlayerIndex (\StratProfile) - \Utility'_\PlayerIndex(\StratProfile)}$, from which it follows that
$\sup_{\PlayerIndex \in \SetOfPlayers, \MixedStratProfileAlt \in \MixedStratProfileSpace} \sabs{\Utility_\PlayerIndex (\MixedStratProfileAlt) - \Utility'_\PlayerIndex (\MixedStratProfileAlt)} \leq
\norm{\GameTuple - \GameTuple'}_{\!\infty}$.
Equality holds for any $\PlayerIndex$ and $\StratProfile$ that realize the supremum in $\norm{\GameTuple - \GameTuple'}_{\!\infty}$, as any such pure strategy profile is also mixed.
\end{proof}

\begin{definition}
$\Gamma'$ is said to be a \mydef{uniform $\epsilon$-approximation} of $\GameTuple$ when $\norm{\Gamma - \Gamma'}_{\!\infty} \leq \epsilon$.
\end{definition}

Uniform approximations are so-called
because the bound between utility deviations in $\GameTuple$ and $\GameTuple'$ holds \emph{uniformly\/} over \emph{all\/} players and strategy profiles.  
This notion is central to our work.

\begin{theorem}[Approximability of Equilibria]
\label{thm:equilibria-approximation-bounds}
If two normal-form games, $\GameTuple$ and $\GameTuple'$, are uniform approximations of one another,
then:

\begin{enumerate}
\item $\Nash(\GameTuple) \subseteq \Nash_{2\epsilon} (\GameTuple') \subseteq \Nash_{4\epsilon}(\GameTuple)$

\item $\Nash^\diamond (\GameTuple) \subseteq \Nash^{\diamond}_{2\epsilon} (\GameTuple') \subseteq \Nash^{\diamond}_{4\epsilon} (\GameTuple)$
\end{enumerate}
\end{theorem}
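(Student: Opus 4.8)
The plan is to reduce all four inclusions to a single uniform bound on how much an agent's \emph{regret} can change between the two games. Concretely, I would first establish that for every agent $\PlayerIndex \in \SetOfPlayers$ and every mixed profile $\StratProfile \in \MixedStratProfileSpace$,
\[
\sabs{\Regret^{\diamond}_\PlayerIndex(\GameTuple, \StratProfile) - \Regret^{\diamond}_\PlayerIndex(\GameTuple', \StratProfile)} \le 2\epsilon,
\]
together with the analogous statement for pure regret $\Regret_\PlayerIndex$ over pure profiles. Once this regret-stability bound is in hand, the two chains of inclusions are immediate bookkeeping.

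To prove the bound, I would write regret as a difference of two terms: a best-response value $\sup_{\StratProfile' \in \MixedStratProfileSpaceAlt_{\PlayerIndex, \StratProfile}} \Utility_\PlayerIndex(\StratProfile')$ and the on-profile utility $\Utility_\PlayerIndex(\StratProfile)$. By \Cref{lemma:mixed-approximation}, the utility functions of $\GameTuple$ and $\GameTuple'$ agree to within $\epsilon$ \emph{at every mixed profile}, so the on-profile terms differ by at most $\epsilon$. For the best-response terms I would invoke the elementary fact that two functions that are uniformly within $\epsilon$ have suprema within $\epsilon$; applied to $\Utility_\PlayerIndex$ and $\Utility'_\PlayerIndex$ over the common deviation set $\MixedStratProfileSpaceAlt_{\PlayerIndex, \StratProfile}$ (identical in both games, since they share agents and strategy spaces), this bounds the difference of the two best-response values by $\epsilon$ as well. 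The triangle inequality then combines the two into $2\epsilon$. The pure case is identical, except that the uniform utility bound comes directly from the definition of $\norm{\GameTuple - \GameTuple'}_{\!\infty}$, which already ranges over pure profiles, so \Cref{lemma:mixed-approximation} is not even needed there.

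With the regret-stability bound established, I would chain the inclusions using only nonnegativity of regret and the definition of $\epsilon$-Nash equilibrium. For the mixed case (part 2): if $\StratProfile \in \Nash^\diamond(\GameTuple)$, then every agent has zero regret in $\GameTuple$, so by the bound its regret in $\GameTuple'$ is at most $2\epsilon$, giving $\StratProfile \in \Nash^{\diamond}_{2\epsilon}(\GameTuple')$; if instead $\StratProfile \in \Nash^{\diamond}_{2\epsilon}(\GameTuple')$, then each agent's regret in $\GameTuple'$ is at most $2\epsilon$, so in $\GameTuple$ it is at most $4\epsilon$, giving $\StratProfile \in \Nash^{\diamond}_{4\epsilon}(\GameTuple)$. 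Part 1 follows verbatim with pure profiles and pure regret.

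The only genuinely non-routine step is the difference-of-suprema estimate, and in particular tracking why the constant is exactly $2\epsilon$ rather than $\epsilon$: one $\epsilon$ is spent comparing best-response values across the two games, and a second, independent $\epsilon$ is spent comparing the on-profile utilities. I would also note that the hypothesis is symmetric in $\GameTuple$ and $\GameTuple'$, since $\norm{\GameTuple - \GameTuple'}_{\!\infty}$ is symmetric, which is precisely what licenses the second inclusion in each chain (where the roles of the two games are swapped).
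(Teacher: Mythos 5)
Your proposal is correct and rests on exactly the same core estimate as the paper's proof: one $\epsilon$ from comparing the best-response suprema over $\MixedStratProfileSpaceAlt_{\PlayerIndex,\StratProfile}$ via Lemma~\ref{lemma:mixed-approximation}, and a second $\epsilon$ from the on-profile utilities, combining to the $2\epsilon$ shift in regret. The only cosmetic differences are that you package this as a symmetric regret-stability bound $\sabs{\Regret^{\diamond}_\PlayerIndex(\GameTuple,\StratProfile) - \Regret^{\diamond}_\PlayerIndex(\GameTuple',\StratProfile)} \le 2\epsilon$ and prove part~1 directly, whereas the paper proves the one-sided containment $\Nash^\diamond_\gamma(\GameTuple) \subseteq \Nash^\diamond_{2\epsilon+\gamma}(\GameTuple')$ parametrized by $\gamma$ and deduces part~1 from part~2 by intersecting with the pure profiles.
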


\begin{proof}
First note the following: if $\smash{A \subseteq B}$, then $\smash{C \cap A \subseteq C \cap B}$.
Hence, since any pure Nash equilibrium is also a mixed Nash equilibrium, taking $C$ to be the set of all pure strategy profiles, we need only show 2.
We do so by showing $\smash{\Nash^\diamond_\gamma (\GameTuple) \subseteq \Nash^\diamond_{2\epsilon + \gamma} (\GameTuple')}$, for $\gamma \ge 0$, which implies both containments, taking $\gamma = 0$ for the lesser, and $\gamma = 2\epsilon$ for the greater.

Suppose $\StratProfile \in \Nash^\diamond_\gamma (\GameTuple)$, for all $\PlayerIndex \in \SetOfPlayers$.
We will show that $\StratProfile$ is ($2\epsilon + \gamma$)-optimal in $\GameTuple$, for all $\PlayerIndex \in \SetOfPlayers$.
Fix an agent $\PlayerIndex$, and
define $\MixedStratProfileSpaceAlt_{\PlayerIndex, \StratProfile} \doteq \{ \MixedStratProfileAlt \in \MixedStratProfileSpace \mid \MixedStratProfileAlt_q = \StratProfile_q, \forall q \neq \PlayerIndex \}$.
In words, $\MixedStratProfileSpaceAlt_{\PlayerIndex, \StratProfile}$ is the set of all mixed strategy profiles in which the strategies of all agents $q \ne \PlayerIndex$ are fixed at $\StratProfile_q$.
Now take
$\StratProfile^* \in \argmax_{\MixedStratProfileAlt \in \MixedStratProfileSpaceAlt_{\PlayerIndex, \StratProfile}} \Utility_\PlayerIndex (\MixedStratProfileAlt)$ and
$\StratProfile'^* \in \argmax_{\MixedStratProfileAlt \in \MixedStratProfileSpaceAlt_{\PlayerIndex, \StratProfile}} \Utility'_\PlayerIndex (\MixedStratProfileAlt)$.
Then:
\begin{align*}
\Regret_\PlayerIndex (\GameTuple, \StratProfile)
&= \Utility'_\PlayerIndex({\StratProfile'}^*) - \Utility'_\PlayerIndex(\StratProfile) \\
&\leq (\Utility_\PlayerIndex({\StratProfile'}^*) + \epsilon) - (\Utility_\PlayerIndex(\StratProfile) - \epsilon) \\
&\leq (\Utility_\PlayerIndex(\StratProfile^*) + \epsilon) - (\Utility_\PlayerIndex(\StratProfile) - \epsilon) \\
 &\leq (\Utility_\PlayerIndex(\StratProfile^*) + \epsilon) - (\Utility_\PlayerIndex(\StratProfile^*) - \epsilon - \gamma) = 2\epsilon + \gamma
\end{align*}

\noindent
The first line follows by definition.
The second holds by Lemma~\ref{lemma:mixed-approximation} and the fact that $\Gamma'$ is a uniform $\epsilon$-approximation of $\Gamma$, the second as $\StratProfile^*$ is optimal for $\PlayerIndex$ in $\GameTuple$, and the third as $\StratProfile$ is a $\gamma$-Nash in $\GameTuple$.
\end{proof}

\section{Learning Framework}
\label{sec:learn}

In this section, we move on from approximating equilibria in games to learning them.
We present algorithms that learn so-called empirical games, which comprise estimates of the expected utilities of simulation-based games.
We further derive \mydef{uniform convergence bounds},
proving that our algorithms output empirical games that uniformly approximate their expected counterparts, with high probability.
Therefore, the equilibria of these empirical games approximate those of the corresponding simulation-based games, with high probability.

\if 0
Establishing uniform convergence guarantees for empirical games is non-trivial for the following reason: when estimating many parameters simultaneously, accurate inferences for each individual parameter do not necessarily imply a similar degree of accuracy for the parameters in aggregate.  This phenomenon is called \mydef{multiple comparisons problem} (MCP).
The \mydef{family wise error rate} (FWER) quantifies the probability that one or more of the 
approximation guarantees are violated; thus controlling the FWER is one way to control for multiple comparisons.

In this work, we consider two approaches to FWER control in empirical games.
The first builds on classical methods, using concentration inequalities to first establish confidence intervals about each parameter \emph{individually}, and then applying a statistical correction (e.g., Bonferroni or \v{S}id\'{a}k) to bound the probability that all 
approximation guarantees hold \emph{simultaneously}, often incurring
looseness in one or both steps.
The second is based on Rademacher averages, which directly bound the error of all parameters simultaneously by replacing the per-parameter concentration inequalities with a single concentration inequality, and an additional data-dependent term, thereby circumventing the looseness inherent in classical methods.
\fi

\subsection{Empirical Game Theory}

We start by developing a formalism for modeling simulation-based games, which we use to formalize empirical games, the main object of study in empirical game-theoretic analysis.

\begin{definition}[Conditional Normal-Form Game]
\label{def:conditional-game}
A \mydef{conditional normal-form game} $\ConditionalGame{\ConditionSpace} \doteq \langle \ConditionSpace, \SetOfPlayers, \{ \StrategySet_\PlayerIndex \mid \PlayerIndex \in \SetOfPlayers \}, \Utility(\cdot) \rangle$ consists of a set of conditions $\smash{\ConditionSpace}$, a set of agents $\smash{\SetOfPlayers}$, with pure strategy set $\smash{\StrategySet_\PlayerIndex}$ available to agent $\PlayerIndex$, and a vector-valued conditional utility function $\smash{\Utility : \StratProfileSpace \times \ConditionSpace \to \R^{\NumberOfPlayers}}$.
Given a condition $\smash{\ConditionValue \in \ConditionSpace}$, $\Utility(\cdot; \ConditionValue)$ yields a standard utility function of the form $\smash{\StratProfileSpace \to \R^{\NumberOfPlayers}}$. %
\end{definition}

\begin{definition}[Expected Normal-Form Game]
\label{def:expected-game}
Given a conditional normal-form game $\smash{\ConditionalGame{\ConditionSpace}}$ together with distribution $\ConditionDistribution$, we define the %
\mydef{expected utility function} $\smash{\Utility(\StratProfile; \ConditionDistribution) = \Expwrt{\ConditionValue \distributed \ConditionDistribution}{\Utility(\StratProfile; \ConditionValue)}}$, and the corresponding %
\mydef{expected normal-form game} as $\smash{\InducedGame{\ConditionDistribution} \doteq \langle \SetOfPlayers, \{ \StrategySet_\PlayerIndex \mid \PlayerIndex \in \SetOfPlayers \}, \Utility(\cdot; \ConditionDistribution) \rangle}$.
\end{definition}

Expected normal-form games serve as our mathematical model of simulation-based games.  They are sufficient not only to model arbitrary black-box games, but additionally games where the \emph{rules} are known but \emph{environmental conditions} are random (e.g., auctions where bidder valuations are random, or \emph{deterministic} war games where initial armies and terrain are random), as well as games with \emph{randomness}, where $\ConditionSpace$ is taken to be a \emph{PRNG seed} or \emph{entropy source}.

\begin{definition}[Empirical Normal-Form Game]
\label{def:empirical-game}
Given a conditional normal-form game $\smash{\ConditionalGame{\ConditionSpace}}$ together with a distribution $\ConditionDistribution$ from which we can draw samples $\smash{\Samples = (\SamplePoint_1, \ldots, \SamplePoint_\NumberOfSamples) \distributed \ConditionDistribution^\NumberOfSamples}$, we define the %
\mydef{empirical utility function} $\smash{\hat{\Utility}(\StratProfile; \Samples) \doteq \frac{1}{\NumberOfSamples}\sum_{\SampleIndex=1}^\NumberOfSamples \Utility(\StratProfile; \SamplePoint_\SampleIndex)}$, and the corresponding 
\mydef{empirical normal-form game} as $\smash{\EmpiricalGame{\Samples} \doteq \langle \SetOfPlayers, \{ \StrategySet_\PlayerIndex \mid \PlayerIndex \in \SetOfPlayers \}, \hat{\Utility}(\cdot ; \Samples)\rangle}$.
\end{definition}

\begin{comment}
\begin{remark}
The algorithms we present learn empirical games---an empirical analog of expected normal-form games---given a conditional normal-form game together with a distribution $\ConditionDistribution$.  
When using EGTA to analyze black-box games, it is often assumed that the simulator takes in a strategy profile, consumes randomness, and produces utility samples.
Our framework is sufficiently general to model this set up, as $\ConditionDistribution$ can represent any randomness consumed by a black-box simulator.
\end{remark}
\end{comment}

\begin{observation}[Learnability] 
Consider a conditional normal-form game $\smash{\ConditionalGame{\ConditionSpace}}$ together with a distribution $\ConditionDistribution$ and $\Samples \sim \ConditionDistribution^\NumberOfSamples$, and the corresponding expected and empirical games, namely $\smash{\InducedGame{\ConditionDistribution}}$ and $\smash{\EmpiricalGame{\Samples}}$.
If, for some $\epsilon, \delta > 0$, $\displaystyle \mathbb{P}_{\Samples \sim \ConditionDistribution^{\NumberOfSamples}} \left( \smash{\norm{\InducedGame{\ConditionDistribution} - \EmpiricalGame{\Samples}}_{\!\infty} \le \epsilon} \right) \ge 1 - \delta$, then the
equilibria of $\smash{\InducedGame{\ConditionDistribution}}$ are learnable: i.e., the equilibria of $\smash{\EmpiricalGame{\Samples}}$ estimate the equilibria of $\smash{\EmpiricalGame{\ConditionDistribution}}$ up to additive error $\epsilon$ with probability at least $1 - \delta$.
\label{obs:learnability}
\end{observation}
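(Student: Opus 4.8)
The plan is to obtain this as an immediate corollary of the deterministic approximability guarantee, \Cref{thm:equilibria-approximation-bounds}, with the probabilistic hypothesis serving only to ensure that the theorem's precondition holds on a high-probability event. All of the game-theoretic content already resides in the approximation framework of \Cref{sec:approx}; the only genuinely new ingredient is the observation that a random draw $\Samples \distributed \ConditionDistribution^\NumberOfSamples$ produces an empirical game close to the expected game with probability at least $1-\delta$. Thus the proof is essentially a conditioning argument.

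First I would fix the high-probability event $E \doteq \{ \norm{\InducedGame{\ConditionDistribution} - \EmpiricalGame{\Samples}}_{\!\infty} \le \epsilon \}$. Because the $\ell_\infty$-norm is symmetric, $E$ is exactly the event on which $\InducedGame{\ConditionDistribution}$ and $\EmpiricalGame{\Samples}$ are uniform $\epsilon$-approximations of one another, in the sense of the definition of uniform $\epsilon$-approximation preceding \Cref{thm:equilibria-approximation-bounds}. The hypothesis supplies $\Probp{E} \ge 1 - \delta$.

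Next I would apply \Cref{thm:equilibria-approximation-bounds} pointwise on $E$. Whenever $E$ occurs, the pair $(\InducedGame{\ConditionDistribution}, \EmpiricalGame{\Samples})$ satisfies the theorem's hypothesis, so both of its conclusions hold: $\Nash(\InducedGame{\ConditionDistribution}) \subseteq \Nash_{2\epsilon}(\EmpiricalGame{\Samples}) \subseteq \Nash_{4\epsilon}(\InducedGame{\ConditionDistribution})$, together with the analogous chain for mixed equilibria. These inclusions are precisely the formal content of the phrase ``the equilibria of $\EmpiricalGame{\Samples}$ estimate those of $\InducedGame{\ConditionDistribution}$'': every equilibrium of the expected game is recovered as an approximate equilibrium of the empirical game (recall), and every such approximate equilibrium is in turn an approximate equilibrium of the expected game (approximate precision). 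Conditioning on $E$ therefore yields approximability, and since $\Probp{E} \ge 1 - \delta$, approximability holds with probability at least $1 - \delta$, which by definition is learnability.

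The step demanding the most care---conceptual rather than computational---is the transfer of probability from the norm event $E$ to the equilibrium-inclusion event. The key point is that \Cref{thm:equilibria-approximation-bounds} is a deterministic implication, so the inclusion chains hold on \emph{every} outcome in $E$; hence the event on which the inclusions hold contains $E$ and inherits its probability lower bound, with no additional union bound or loss. I would also flag a mild abuse in the constants: the theorem delivers approximation errors $2\epsilon$ and $4\epsilon$ rather than the bare $\epsilon$ appearing in the statement, so ``up to additive error $\epsilon$'' should be read in the qualitative approximability sense of \Cref{thm:equilibria-approximation-bounds} rather than as a literal $\epsilon$-bound.
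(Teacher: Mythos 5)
Your proposal is correct and takes essentially the same route as the paper, which simply instantiates \Cref{thm:equilibria-approximation-bounds} with $\GameTuple = \InducedGame{\ConditionDistribution}$ and $\GameTuple' = \EmpiricalGame{\Samples}$ and notes that its conclusions hold on the probability-$(1-\delta)$ event; your conditioning argument just spells this out more explicitly. Your remark that the resulting guarantees carry constants $2\epsilon$ and $4\epsilon$ rather than the bare $\epsilon$ in the statement is a fair and accurate caveat.
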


\begin{proof}
Taking $\GameTuple$ to be $\InducedGame{\ConditionDistribution}$ and $\GameTuple'$ to be $\EmpiricalGame{\Samples}$, the dual conclusions of
\Cref{thm:equilibria-approximation-bounds} hold with probability at least $1 - \delta$.
\end{proof}

Our present goal, then, is to ``uniformly learn'' empirical games (i.e., obtain uniform convergence guarantees) from finitely many samples.
As per Observation~\ref{obs:learnability}, we can then apply the machinery of \Cref{thm:equilibria-approximation-bounds} to infer guarantees on the equilibria of simulation-based games.
As already noted, this learning problem is non-trivial because it involves multiple comparisons.
We describe two potential solutions, both of which are intended to control the FWER.
The first is a classical method: it applies a Bonferroni correction to multiple per-parameter confidence intervals derived via Hoeffding's inequality; the second uses Rademacher averages.
Both approaches yield
bounds on the rate at which \emph{all\/} utility estimates converge to their expectations.

\subsection{Hoeffding's Inequality}
\label{sec:concentration:hoeffding}

Hoeffding's inequality for sums of independent bounded random variables can be used to obtain tail bounds on the probabilities that empirical utilities differ greatly from their expectations.
In the next theorem, we use this inequality to estimate a single utility value, and then apply a union bound to estimate all utility values simultaneously.
Note that this theorem applies only to finite games: i.e., games for which the index set $\Indices \subseteq \SetOfPlayers \times \StratProfileSpace$ is finite.
Given a finite game, we state all bounds for an arbitrary index set $\Indices$; then, taking $\Indices = \SetOfPlayers \times \StratProfileSpace$,
our bounds imply bounds on $\smash{\norm{\Utility(\cdot; \ConditionDistribution) - \hat{\Utility}(\cdot; \Samples)}_{\!\infty}}$.

\begin{theorem}[Finite-Sample Bounds for expected Normal-Form Games via Hoeffding's Inequality]
Consider finite, conditional normal-form game $\ConditionalGame{\ConditionSpace}$ together with distribution $\ConditionDistribution$ and index set $\smash{\Indices \subseteq \SetOfPlayers \times \StratProfileSpace}$ such that for all $\smash{\ConditionValue \in \ConditionSpace}$ and $\smash{(\PlayerIndex, \StratProfile) \in \Indices}$, it holds that $\smash{\Utility_{\PlayerIndex}(\StratProfile; \ConditionValue) \in [-{\nicefrac{\UtilityRange}{2}}, {\nicefrac{\UtilityRange}{2}}]}$,
where $c \in \mathbb{R}$.
Then, with probability at least $\smash{1 - \delta}$ over $\smash{\Samples \distributed \ConditionDistribution^\NumberOfSamples}$, 
we may bound the deviation between $\Utility(\cdot; \ConditionDistribution)$ and $\hat{\Utility}(\cdot; \Samples)$ for a single $\smash{(\PlayerIndex, \StratProfile) \in \Indices}$, and for all $\smash{(\PlayerIndex, \StratProfile) \in \Indices}$, respectively, as:

\begin{enumerate}
\item $\displaystyle %
\left\lvert \Utility_{\PlayerIndex}(\StratProfile; \ConditionDistribution) - \hat{\Utility}_{\PlayerIndex}(\StratProfile; \Samples) \right\rvert \le \UtilityRange\sqrt{\frac{\ln \left(  \nicefrac{2}{\delta} \right)}{2\NumberOfSamples}}$ %

\item $\displaystyle %
\sup_{(\PlayerIndex, \StratProfile) \in \Indices} \left\lvert \Utility_{\PlayerIndex}(\StratProfile; \ConditionDistribution) - \hat{\Utility}_{\PlayerIndex}(\StratProfile; \Samples) \right\rvert \le \UtilityRange\sqrt{\frac{\ln \left( \nicefrac{2 \abs{\Indices}}{\delta} \right)}{2\NumberOfSamples}}$ %

\end{enumerate}
\label{thm:Hoeffding}
\end{theorem}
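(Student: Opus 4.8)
The plan is to establish the single-parameter bound (Part 1) by a direct application of Hoeffding's inequality to the empirical mean, and then to lift it to the simultaneous bound (Part 2) by a union bound (Bonferroni correction) over the finite index set $\Indices$, with the per-parameter confidence level sharpened to $\nicefrac{\delta}{\abs{\Indices}}$.

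For Part 1, fix $(\PlayerIndex, \StratProfile) \in \Indices$ and consider the $\NumberOfSamples$ independent random variables $\Utility_\PlayerIndex(\StratProfile; \SamplePoint_\SampleIndex)$, for $\SampleIndex = 1, \ldots, \NumberOfSamples$, each taking values in $[-\nicefrac{\UtilityRange}{2}, \nicefrac{\UtilityRange}{2}]$ by hypothesis and hence of range $\UtilityRange$. Their average is precisely $\hat{\Utility}_\PlayerIndex(\StratProfile; \Samples)$, and by linearity of expectation its mean is $\Utility_\PlayerIndex(\StratProfile; \ConditionDistribution)$. Hoeffding's inequality for averages of bounded independent variables then gives
\[
\Probop_{\Samples \distributed \ConditionDistribution^\NumberOfSamples}\!\left( \abs{\hat{\Utility}_\PlayerIndex(\StratProfile; \Samples) - \Utility_\PlayerIndex(\StratProfile; \ConditionDistribution)} \ge t \right) \le 2\expp{-\frac{2\NumberOfSamples t^2}{\UtilityRange^2}} ,
\]
where the denominator $\UtilityRange^2$ arises because $\sum_{\SampleIndex=1}^\NumberOfSamples (\text{range})^2 = \NumberOfSamples \UtilityRange^2$. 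Setting the right-hand side equal to $\delta$ and solving for $t$ yields $t = \UtilityRange\sqrt{\nicefrac{\lnp{\nicefrac{2}{\delta}}}{2\NumberOfSamples}}$, which is exactly the claimed threshold, so the complementary event has probability at least $1 - \delta$.

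For Part 2, apply the single-parameter bound with failure probability $\nicefrac{\delta}{\abs{\Indices}}$ in place of $\delta$, which replaces the threshold by $t = \UtilityRange\sqrt{\nicefrac{\lnp{\nicefrac{2\abs{\Indices}}{\delta}}}{2\NumberOfSamples}}$. For each $(\PlayerIndex, \StratProfile) \in \Indices$, let $A_{\PlayerIndex, \StratProfile}$ be the event that the deviation exceeds this $t$; by Part 1 each satisfies $\Probp{A_{\PlayerIndex, \StratProfile}} \le \nicefrac{\delta}{\abs{\Indices}}$. Since $\Indices$ is finite, the union bound gives $\Probp{\bigcup_{(\PlayerIndex, \StratProfile) \in \Indices} A_{\PlayerIndex, \StratProfile}} \le \sum_{(\PlayerIndex, \StratProfile) \in \Indices} \Probp{A_{\PlayerIndex, \StratProfile}} \le \abs{\Indices} \cdot \nicefrac{\delta}{\abs{\Indices}} = \delta$. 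Passing to the complement, with probability at least $1 - \delta$ the deviation is at most $t$ for \emph{all} $(\PlayerIndex, \StratProfile) \in \Indices$ simultaneously, which is the stated supremum bound.

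This is a textbook combination of Hoeffding and Bonferroni, so I anticipate no genuine obstacle; the only points demanding care are (i) correctly tracking the factor $\UtilityRange^2$ in Hoeffding's denominator, which stems from the range-$\UtilityRange$ boundedness assumption rather than a range of $\nicefrac{\UtilityRange}{2}$, and (ii) tightening the per-parameter confidence level to $\nicefrac{\delta}{\abs{\Indices}}$ so that the union bound delivers an overall failure probability of exactly $\delta$. Finiteness of $\Indices$ is precisely what makes the union bound applicable, which is why the theorem is restricted to finite games; the looseness of this $\abs{\Indices}$ factor is also what motivates the Rademacher-based alternative discussed subsequently.
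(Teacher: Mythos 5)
Your proposal is correct and follows essentially the same route as the paper's own proof: a direct application of Hoeffding's inequality with range $\UtilityRange$ for the single-index bound, and a union bound over the finite index set for the simultaneous bound (the paper fixes $\epsilon$ and sets the summed failure probability $2\abs{\Indices}e^{-2\epsilon^2\NumberOfSamples/\UtilityRange^2}$ equal to $\delta$, whereas you allocate $\nicefrac{\delta}{\abs{\Indices}}$ per index upfront --- these are the same calculation in a different order). No gaps.
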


At a high-level, the first claim follows immediately from Hoeffding's inequality, and the second, via a union bound.
A formal proof of this theorem appears in Appendix~\ref{sec:proofs}.

\begin{comment}

If we are willing to query our simulator separately for each agent's utility, rather than make use of the vector of all agent's utilities which we normally would assume the simulator returns).
\end{comment}

By assuming independence among agents' utilities, and then applying a
\v{S}id\'{a}k rather than a Bonferroni correction, a slightly tighter bound was derived previously~\cite{tuyls2018generalised}.%
\footnote{Specifically, we use $\mathbb{P}(A \vee B) \leq \mathbb{P}(A) + \mathbb{P}(B)$, whereas previous work~\cite{tuyls2018generalised} assumes independence and uses $\mathbb{P}(A \vee B) = 1 - (1 - \mathbb{P}(A))(1 - \mathbb{P}(B))$.}
Note, however, that utilities in simulation-based games could very well exhibit dependencies. 
For example, imagine a simulation-based game where agents' utilities depend on the weather: e.g., on snowy winter days, all utilities are high, while on rainy winter days, they are all low.
To learn the utilities in this game, one could first sample the weather, and then sample utilities conditioned on the weather.
Our bound holds in this case, so long as (only) the weather samples are independent.  That utilities are independent, assuming an arbitrary black-box simulator, is a needlessly strong assumption.

\subsection{Rademacher Averages}
\label{section:concentration:rademacher}

Because Hoeffding's inequality assumes only bounded noise, it is often a loose bound, particularly for random variables that are tightly concentrated in a relatively small region of their domain.  
Theorem~\ref{thm:Hoeffding} and Section 4.2 in~\cite{tuyls2018generalised} both use Hoeffding's inequality to first bound individual utilities, and then combine those individual bounds using a statistical correction procedure (i.e., Bonferroni and \v{S}id\'{a}k, respectively).  Unfortunately, these corrections amplify the errors: e.g., Theorem~\ref{thm:Hoeffding} requires an $\smash{\mathcal{O}(\sqrt{\ln \abs{\Indices}})}$ factor increase in the widths of the confidence intervals.

In contrast, Rademacher averages---which
intuitively resemble permutation tests---bound
the deviation between (potentially infinitely) many utilities and their expectations directly.
\if 0
Rademacher averages, in contrast,
allows us to directly obtain simultaneous confidence intervals about all utilities, because instead of combining the results of many independent hypothesis tests, they essentially test (potentially infinitely) many hypotheses all at once.
\fi
For very large games, a Rademacher-based approach can yield bounds that are significantly tighter than classical methods.
For small and medium-sized games, however, they are often looser.  Nonetheless, we show that under natural conditions, they are never looser by more than a constant factor.

There are two equivalent definitions of Rademacher averages, one with families of \emph{sets}, which is often more natural for unsupervised learning problems, and one with families of \emph{functions}, which is usually more convenient for supervised learning.  We present yet a third, isomorphic, definition, which facilitates the analysis of expected normal-form games.

\begin{comment}
We mostly concern ourselves with \emph{scalar-valued games}, as we are interested in the analysis of scalar-equilibria. However, sometimes it is more convenient to work with vector-valued games, and given any function $\phi$ from $\R^m \to \R$, we can convert a vector-valued game to a scalar-valued game, by applying $\phi$ to the vector-utility to produce a scalar, and analyze it as desired.  This is of great practical use; to illustrate thrice, consider first perhaps a single resource is valued, but we don't know which a priori, thus we must learn the game for each, and later we take $\phi$ to be an indicator function, consider second a case where the relative value of each resource is unknown a priori, thus we take $\phi$ to be a linear function, and consider thirdly, perhaps our agents are risk-averse, and wish to consider not just expected payoffs but also their variances: here we could estimate \emph{means} $\mu$ and \emph{raw variances} $\sigma^2_r$, then $\phi$ could be a 2-argument function that t of the form $\phi(\mu, \sigma^2) = \mu - \lambda(\sigma^2_r - \mu^2)$ (mean-variance analysis), $\phi(\mu, \sigma^2) = \mu - \lambda\sqrt{\sigma^2_r - \mu^2}$ (Chebyshev tail bound), or $\phi(\mu, \sigma^2) = \frac{\mu}{\sqrt{\sigma^2_r - \mu^2}}$ (Sharpe ratio).
\end{comment}

\begin{comment}
With the expected game framework well-defined, our sampled games framework is now on rigorous footing.  We now define the \emph{Rademacher averages} of an expected statistical game.
\end{comment}

\begin{definition}[Rademacher Averages of Expected Normal-Form Games]
Consider a conditional normal-form game $\ConditionalGame{\ConditionSpace}$ together with distribution $\ConditionDistribution$ and take index set $\Indices \subseteq \SetOfPlayers \times \StratProfileSpace$.  Suppose $\smash{\Samples \doteq (\SamplePoint_1, \ldots, \SamplePoint_{\NumberOfSamples}) \distributed \ConditionDistribution^\NumberOfSamples}$, and take $\smash{\vsigma \distributed \Rademacher^\NumberOfSamples}$; the Rademacher distribution is uniform on $\pm 1$.  We define:

\begin{enumerate}

\begin{comment}
\item \textbf{Empirical Maximum Discrepancy} (EMD): Suppose $2 | \NumberOfSamples$, then \\ $\EMD{\NumberOfSamples}{\GameTuple,\StratProfileSpace'}{\Samples} \doteq \sup_{\StratProfile\in\StratProfileSpace',\PlayerIndex\in\SetOfPlayers} \frac{1}{\NumberOfSamples}\sum_{\SampleIndex = 1}^\NumberOfSamples (-1)^\SampleIndex \Utility_{\PlayerIndex}(\StratProfile ; \SamplePoint_\SampleIndex)$.

\amy{Cyrus, why don't we define
maximum discrepancy, taking expectations %
$\MD{\NumberOfSamples}{\GameTuple,\StratProfileSpace'}{\ConditionDistribution} \doteq \Expwrt{\Samples}{\EMD{\NumberOfSamples}{\GameTuple, \StratProfileSpace'}{\Samples}}$
we don't use ERA or RA in our exp'ts. so we don't need them, except in theory. why isn't MD of interest likewise, if the EMD bound is better than the 1-ERA bound?}

\end{comment}

\item \textbf{1-Draw Empirical Rademacher Average} (1-ERA):
$\DERC{\NumberOfSamples}{\GameTuple, \Indices}{\Samples}{\vsigma} \doteq \sup\limits_{(\PlayerIndex, \StratProfile) \in \Indices} \Big\lvert \frac{1}{\NumberOfSamples}\sum_{\SampleIndex = 1}^\NumberOfSamples \vsigma_\SampleIndex \Utility_{\PlayerIndex}(\StratProfile ; \SamplePoint_\SampleIndex) \Big\rvert$

\item \textbf{Rademacher Average} (RA):
$\RC{\NumberOfSamples}{\GameTuple,\Indices}{\ConditionDistribution} \doteq \Expwrt{\Samples,\vsigma}{\DERC{\NumberOfSamples}{\GameTuple, \Indices}{\Samples}{\vsigma}}$
\end{enumerate}

\end{definition}

Observe that the 1-ERA can be computed exactly from one draw of $\Samples$ and $\vsigma$, whereas computing the RA involves taking an expectation over these random variables.  The RA is used to obtain tail bounds on the quality of estimates, but since the ERA is tightly concentrated about the RA, the ERA can also be used for this purpose.

\begin{theorem}[Finite-Sample Rademacher Bounds for Expected Normal-Form Games]
Consider conditional normal-form game $\ConditionalGame{\ConditionSpace}$ together with distribution $\ConditionDistribution$ and index set $\smash{\Indices \subseteq \SetOfPlayers \times \StratProfileSpace}$ such that for all $\smash{\ConditionValue \in \ConditionSpace}$ and $\smash{(\PlayerIndex, \StratProfile) \in \Indices}$, it holds that $\smash{\Utility_{\PlayerIndex}(\StratProfile; \ConditionValue) \in [-{\nicefrac{\UtilityRange}{2}}, {\nicefrac{\UtilityRange}{2}}]}$, 
where $c \in \mathbb{R}$.
Then, with probability at least $\smash{1 - \delta}$ over $\smash{\Samples \distributed \ConditionDistribution^\NumberOfSamples}$, we may bound the deviation between $\Utility(\cdot; \ConditionDistribution)$ and $\hat{\Utility}(\cdot; \Samples)$ over all $\smash{(\PlayerIndex, \StratProfile) \in \Indices}$ with the 1-ERA and the RA, respectively, as:
\begin{enumerate}
\item
$\displaystyle \sup_{(\PlayerIndex, \StratProfile) \in \Indices} \abs{\Utility_{\PlayerIndex} (\StratProfile; \ConditionDistribution) - \hat{\Utility}_{\PlayerIndex} (\StratProfile; \Samples)} \leq 2\DERC{\NumberOfSamples}{\GameTuple, \Indices}{\Samples}{\vsigma} + 3\UtilityRange\sqrt{\frac{\ln\left(\nicefrac{1}{\delta}\right)}{2\NumberOfSamples}}$

\item
$\displaystyle
\sup_{(\PlayerIndex, \StratProfile) \in \Indices} \abs{\Utility_{\PlayerIndex} (\StratProfile; \ConditionDistribution) - \hat{\Utility}_{\PlayerIndex} (\StratProfile; \Samples)} \leq 2\RC{\NumberOfSamples}{\GameTuple, \Indices}{\ConditionDistribution} + \UtilityRange\sqrt{\frac{\ln\left(\nicefrac{1}{\delta}\right)}{2\NumberOfSamples}} \leq \UtilityRange \sqrt{\frac{\ln(\abs{\Indices})}{2\NumberOfSamples}} + \UtilityRange\sqrt{\frac{\ln\left(\nicefrac{1}{\delta}\right)}{2\NumberOfSamples}}$
\end{enumerate}
\label{thm:Rademacher}
\end{theorem}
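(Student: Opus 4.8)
The plan is to recognize the quantity to be bounded as the uniform deviation of an empirical process. Set $\mathcal{F} \doteq \{\, \ConditionValue \mapsto \Utility_\PlayerIndex(\StratProfile;\ConditionValue) : (\PlayerIndex,\StratProfile)\in\Indices \,\}$, a family of functions $\ConditionSpace \to [-\nicefrac{\UtilityRange}{2}, \nicefrac{\UtilityRange}{2}]$ indexed by $\Indices$, and write $\Phi(\Samples) \doteq \sup_{(\PlayerIndex,\StratProfile)\in\Indices} \lvert \Utility_\PlayerIndex(\StratProfile;\ConditionDistribution) - \hat{\Utility}_\PlayerIndex(\StratProfile;\Samples) \rvert$ for the object of interest. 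I would then run the classical three-move argument — bounded-difference concentration, symmetrization, and Massart's finite-class lemma — and, for the $1$-ERA bound, add a fourth move that trades the expectation defining the RA for a single joint draw of $(\Samples,\vsigma)$.

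For part (2), first observe that $\Phi$ has the bounded-difference property: replacing one sample $\SamplePoint_\SampleIndex$ perturbs each empirical mean $\hat{\Utility}_\PlayerIndex(\StratProfile;\Samples)$ by at most $\nicefrac{\UtilityRange}{\NumberOfSamples}$, hence moves $\Phi$ by at most $\nicefrac{\UtilityRange}{\NumberOfSamples}$. McDiarmid's inequality then gives $\Phi(\Samples) \le \Expwrt{\Samples}{\Phi(\Samples)} + \UtilityRange\sqrt{\frac{\ln\left(\nicefrac{1}{\delta}\right)}{2\NumberOfSamples}}$ with probability at least $1-\delta$, supplying the additive concentration term. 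Next, the standard symmetrization lemma (introduce a ghost sample, then insert Rademacher signs) yields $\Expwrt{\Samples}{\Phi(\Samples)} \le 2\,\RC{\NumberOfSamples}{\GameTuple,\Indices}{\ConditionDistribution}$, which produces the leading $2\cdot\text{RA}$ term of the first inequality. Finally, for the closed-form second inequality — the only place finiteness of $\Indices$ is needed — I would bound the RA by Massart's finite-class lemma: for each fixed $\Samples$ the vectors $(\Utility_\PlayerIndex(\StratProfile;\SamplePoint_1),\dots,\Utility_\PlayerIndex(\StratProfile;\SamplePoint_\NumberOfSamples))$ have Euclidean norm at most $\tfrac{\UtilityRange}{2}\sqrt{\NumberOfSamples}$, so the Rademacher average over $\abs{\Indices}$ indices is of order $\UtilityRange\sqrt{\frac{\ln\abs{\Indices}}{2\NumberOfSamples}}$; taking expectations over $\Samples$ preserves this.

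For part (1), the extra difficulty is that the RA is itself an expectation over both $\Samples$ and $\vsigma$, whereas the reported bound must be computable from a single draw $\DERC{\NumberOfSamples}{\GameTuple,\Indices}{\Samples}{\vsigma}$. I would view $\DERC{\NumberOfSamples}{\GameTuple,\Indices}{\Samples}{\vsigma}$ as a function of the $2\NumberOfSamples$ independent coordinates $(\SamplePoint_1,\dots,\SamplePoint_\NumberOfSamples,\vsigma_1,\dots,\vsigma_\NumberOfSamples)$: flipping a sign $\vsigma_\SampleIndex$ or resampling a $\SamplePoint_\SampleIndex$ each changes it by at most $\nicefrac{\UtilityRange}{\NumberOfSamples}$, so a second application of McDiarmid upgrades $\RC{\NumberOfSamples}{\GameTuple,\Indices}{\ConditionDistribution} = \Expwrt{\Samples,\vsigma}{\DERC{\NumberOfSamples}{\GameTuple,\Indices}{\Samples}{\vsigma}}$ to its single-draw value at the cost of another term of the same $\UtilityRange\sqrt{\frac{\ln\left(\nicefrac{1}{\delta}\right)}{2\NumberOfSamples}}$ scale. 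Union-bounding the failure probability of this step against that of the $\Phi$-concentration step, and summing the coefficients (one from concentrating $\Phi$, two from doubling the RA-to-$1$-ERA gap), is what assembles the stated factor of $3$ in front of the residual — up to the routine choices of how $\delta$ is split across the events and how the bounded-difference constants are charged.

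The conceptual crux is the symmetrization lemma; once it is in hand, everything else is concentration. The fiddly part is the bookkeeping in part (1): keeping the bounded-difference constants, the allocation of $\delta$ across the two concentration events, and the resulting numerical coefficient aligned so the residual collapses to exactly $3\UtilityRange\sqrt{\frac{\ln\left(\nicefrac{1}{\delta}\right)}{2\NumberOfSamples}}$. I would also flag that only the final, closed-form inequality of part (2) invokes $\abs{\Indices} < \infty$; the $1$-ERA and RA bounds themselves hold verbatim for infinite index sets.
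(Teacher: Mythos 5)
Your part (2) follows the paper's proof essentially verbatim: McDiarmid on the supremum deviation with per-coordinate bounded difference $\nicefrac{\UtilityRange}{\NumberOfSamples}$, the ghost-sample symmetrization lemma to pass to $2\RC{\NumberOfSamples}{\GameTuple,\Indices}{\ConditionDistribution}$, and Massart's finite-class lemma for the closed form; your observation that only this last inequality needs $\abs{\Indices}<\infty$ also matches the paper. The gap is in part (1). You propose the textbook route: concentrate the supremum deviation about its mean, separately concentrate $\DERC{\NumberOfSamples}{\GameTuple,\Indices}{\Samples}{\vsigma}$ about $\RC{\NumberOfSamples}{\GameTuple,\Indices}{\ConditionDistribution}$, and combine via a union bound. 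This cannot yield the inequality as stated. The union bound forces you to split $\delta$ across the two events, so each radical degrades to $\sqrt{\ln(\nicefrac{2}{\delta})}$ (or worse) rather than $\sqrt{\ln(\nicefrac{1}{\delta})}$; moreover, if you treat the $2\NumberOfSamples$ coordinates $(\SamplePoint_1,\dots,\SamplePoint_\NumberOfSamples,\vsigma_1,\dots,\vsigma_\NumberOfSamples)$ separately, each with bounded difference $\nicefrac{\UtilityRange}{\NumberOfSamples}$, the 1-ERA concentration step carries an extra factor of $\sqrt{2}$, so the summed coefficient is $1+2\sqrt{2}>3$. No allocation of $\delta$ between the two events recovers exactly $3\UtilityRange\sqrt{\nicefrac{\ln(\nicefrac{1}{\delta})}{2\NumberOfSamples}}$; the paper's own proof explicitly flags this two-application-plus-union-bound argument as yielding "a slightly weaker bound."

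The paper avoids it by applying McDiarmid \emph{once}, to the single random variable
\[
G(\Samples,\vsigma) \doteq \sup_{(\PlayerIndex,\StratProfile)\in\Indices}\abs{\Utility_{\PlayerIndex}(\StratProfile;\ConditionDistribution)-\hat{\Utility}_{\PlayerIndex}(\StratProfile;\Samples)} \;-\; 2\DERC{\NumberOfSamples}{\GameTuple,\Indices}{\Samples}{\vsigma} \enspace,
\]
whose expectation over $(\Samples,\vsigma)$ is nonpositive by the very symmetrization lemma you invoke. Treating each pair $(\SamplePoint_\SampleIndex,\vsigma_\SampleIndex)$ as one independent coordinate, replacing it moves the supremum deviation by at most $\nicefrac{\UtilityRange}{\NumberOfSamples}$ and the $2\cdot$1-ERA term by at most $\nicefrac{2\UtilityRange}{\NumberOfSamples}$, so $G$ has bounded differences $\nicefrac{3\UtilityRange}{\NumberOfSamples}$ over $\NumberOfSamples$ coordinates; a single application of McDiarmid at level $\delta$ then gives exactly the stated residual $3\UtilityRange\sqrt{\nicefrac{\ln(\nicefrac{1}{\delta})}{2\NumberOfSamples}}$, with no splitting of $\delta$. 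Your plan proves a correct but strictly weaker statement; to obtain the theorem as written you need this one-shot application to the difference rather than two applications combined by a union bound.
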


A proof of~\Cref{thm:Rademacher} appears in~\Cref{sec:proofs}.

Observe that the 1-ERA bound, Theorem~\ref{thm:Rademacher} (1), has no explicit dependence on the number of parameters $\abs{\Indices}$ being estimated; thus, unlike a correction-based approach, this bound can scale to arbitrarily large games (including infinite games) without necessarily incurring additive error $\Omega(\sqrt{\ln{\abs{\Indices}}})$.
Moreover, the upper bound on the RA, Theorem~\ref{thm:Rademacher} (2), shows that the Rademacher bounds are never asymptotically worse than those obtained via Theorem~\ref{thm:Hoeffding}.

On the other hand, the 1-ERA is a data-dependent quantity; thus, it does not yield strong \emph{a priori\/} bounds.  Still, the 1-ERA bound %
can be much tighter
than the RA upper bound, in which case the bounds of Theorem~\ref{thm:Rademacher} may outperform those of Theorem~\ref{thm:Hoeffding}.
For example, when utility values are \emph{correlated} with respect to $\ConditionDistribution$, then the supremum is over 
correlated variables, which is generally smaller than a supremum over 
uncorrelated variables---strongly correlated sets of variables behave like single variables in the supremum.  Even under anticorrelation, as in constant-sum games, we obtain this benefit, due to the absolute value in the 1-ERA definition.  Additionally, if some $\smash{(\PlayerIndex, \StratProfile) \in \Indices}$ experience more noise than others, the low-noise indices should have a small effect on the 1-ERA, as they are less likely than high-noise indices to realize the supremum.

In \Cref{sec:comparingBounds}, 
we design stylized games with particular properties for which we can prove that the 
bounds of \Cref{thm:Rademacher} are significantly tighter than those of \Cref{thm:Hoeffding}.  Crucially, this performance holds regardless of whether it is known \emph{a priori} that such properties hold.  Consequently, the Rademacher bounds are an attractive choice when we suspect a game is well-behaved, but have insufficient \emph{a priori\/} knowledge to prove it possesses any particular property that would imply strong statistical bounds, as is typical of the simulation-based games analyzed using EGTA.  

\begin{remark}
\label{rem:size}
The second bound in Theorem~\ref{thm:Rademacher} tells us that the Rademacher bounds are never asymptotically worse than those obtained via Theorem~\ref{thm:Hoeffding}.  Their small-sample performance, however, is inferior.  Indeed, we can derive the minimum game size in which the former may outperform the latter by setting $c\sqrt{\nicefrac{\lnp{2\abs{\Indices}/\delta}}{2m}} = 3c\sqrt{\nicefrac{\lnp{1/\delta}}{2m}}$, and solving for $\abs{\Indices}$.
Doing so yields $\abs{\Indices} = \nicefrac{1}{2\delta^8}$.  Taking $\delta = 0.1$, we obtain a minimum $\abs{\Indices}$ of $5 \times 10^7$.  Solving games of this magnitude is not entirely out of reach, even today, and with future increases in computing power, the asymptotic efficiency of the Rademacher bounds will only become more attractive.
\end{remark}

\begin{remark}
While estimating an arbitrary game requires estimating $\smash{\NumberOfPlayers\prod_{i=1}^{\NumberOfPlayers}|\StrategySet_\PlayerIndex|}$ parameters, many games are defined by far fewer parameters.
\mydef{Symmetric games}, for example, where all agents' strategy sets are identical, and utilities depend only on the strategy an agent chooses and the \emph{number\/} of other agents that choose each strategy (not the agents' identities), can be substantially smaller.
Learning such games is easier; as fewer parameters need to be estimated, we can obtain tighter statistical bounds from the same number of samples.  
\end{remark}

\begin{remark}
Rademacher and Hoeffding bounds are just two possible choices of concentration inequalities.  Both require bounded noise, but this is not an inherent limitation of our approach.
We could obtain similar results under varied noise assumptions; e.g., we could assume (unbounded) subgaussian or subexponential noise, and substitute the appropriate Chernoff bounds.
\end{remark}

\section{Learning Algorithms}
\label{sec:algos}

We are now ready to present our algorithms.
Specifically, we discuss two Monte-Carlo sampling-based algorithms that can be used to uniformly learn empirical games, and hence ensure that the equilibria of the games they are learning are accurately approximated with high probability.
Note that our algorithms apply only to finite games, as they require an enumeration of the index set $\Indices$.

A conditional normal form game $\ConditionalGame{\ConditionSpace}$, together with a black box from which we can sample distribution $\ConditionDistribution$, serves as our mathematical model of a black-box simulator from which the utilities of a simulation-based game can be sampled.  Given strategy profile $\StratProfile$, we
assume the simulator outputs a sample $\Utility_\PlayerIndex (\StratProfile, \ConditionValue)$, for \emph{all\/} agents $\PlayerIndex \in \SetOfPlayers$, after drawing a \emph{single\/} condition value $\ConditionValue \sim \ConditionDistribution$.

Our first algorithm, \mydef{global sampling} (\GS), is a straightforward application of Theorems~\ref{thm:Hoeffding} and~\ref{thm:Rademacher}. The second, \mydef{progressive sampling with pruning} (\PSP), iteratively prunes strategies, and thereby has the potential to expedite learning by obtaining tighter bounds than \GS, given the same number of samples.
We explore PSP's potential savings in our experiments (Section~\ref{sec:expts}).

\subsection{Global Sampling}
\label{subsec:GS}

Our first algorithm, \GS{} (\Cref{alg:gs}), samples all utilities of interest, given a sample size $\NumberOfSamples$ and a failure probability $\delta$, and returns the ensuing empirical game together with an $\hat{\epsilon}$ determined by either~\Cref{thm:Hoeffding} or~\Cref{thm:Rademacher} that guarantees an $\hat{\epsilon}$-uniform approximation.

More specifically,
\GS{} takes in a conditional game $\ConditionalGame{\ConditionSpace}$, a black box from which we can sample distribution $\ConditionDistribution$, an index set $\smash{\Indices \subseteq \SetOfPlayers \times \StratProfileSpace}$, a sample size $\NumberOfSamples$, a utility range $\UtilityRange$ such that utilities are required to lie in $[-\nicefrac{\UtilityRange}{2}, \nicefrac{\UtilityRange}{2}]$, and a bound type {\sc Bound}, and then draws $\NumberOfSamples$ samples to produce an empirical game $\hat{\GameTuple}_{\Samples}$, represented by $\tilde{\Utility} (\cdot)$, as well as an additive error $\hat{\epsilon}$,
with the following guarantee:

\begin{algorithm}[htbp]
\begin{algorithmic}[1]
\Procedure{GS}{$\ConditionalGame{\ConditionSpace}, \ConditionDistribution,  \Indices, \NumberOfSamples, \delta, \UtilityRange, \textsc{Bound}$} $\to (\tilde{\Utility}, \hat{\epsilon})$

\State \Input Conditional game $\ConditionalGame{\ConditionSpace}$, black box 
from which we can sample distribution $\ConditionDistribution$,
index set $\Indices$, sample size $\NumberOfSamples$, failure probability $\delta$, utility range $\UtilityRange$, and bound type \textsc{Bound}.

\State \Output Empirical utilities $\tilde{\Utility}, \forall (\PlayerIndex, \StratProfile) \in \Indices$, and additive error $\hat{\epsilon}$.

\State $\Samples \sim \ConditionDistribution^\NumberOfSamples$ \Comment{Draw $\NumberOfSamples$ samples from distribution $\ConditionDistribution$}

\State 
$\tilde{\Utility}_{\PlayerIndex} (\StratProfile) \gets
\hat{\Utility}_{\PlayerIndex}(\cdot; \Samples), \forall (\PlayerIndex, \StratProfile) \in \Indices$
\Comment{Compute empirical utilities}

\If{\textsc{Bound} = \textsc{1-ERA}}

    \State $\vsigma \gets \Rademacher^{\NumberOfSamples}$
    
    \State $r \gets \DERC{\NumberOfSamples}{\GameTuple, \Indices}{\Samples}{\vsigma}$
    \Comment{1-ERA of $\InducedGame{\ConditionDistribution}$}
    
    \State $\hat{\epsilon} \gets 2r + 3 \UtilityRange \sqrt{\nicefrac{\ln(\slfrac{1}{\delta})}{2\NumberOfSamples}}$
    \Comment{1-ERA bound}

\ElsIf{\textsc{Bound} = \textsc{Hoeffding}}

    \State $\hat{\epsilon} \gets \UtilityRange \sqrt{\nicefrac{\lnp{\slfrac{2\abs{\Indices}}{\delta}}}{2\NumberOfSamples}}$ \Comment{Hoeffding bound}

\EndIf

\State $\Return~(\tilde{\Utility}, \hat{\epsilon})$

\EndProcedure
\end{algorithmic}
\caption{Global Sampling}
\label{alg:gs}
\end{algorithm}

\begin{theorem}[Approximation Guarantees of Global Sampling]
\label{thm:gs-guarantee}
Consider conditional game $\smash{\ConditionalGame{\ConditionSpace}}$ together with distribution $\smash{\ConditionDistribution}$ and take index set $\smash{\Indices \subseteq \SetOfPlayers \times \StratProfileSpace}$ such that for all $\smash{\ConditionValue \in \ConditionSpace}$ and $\smash{(\PlayerIndex, \StratProfile) \in \Indices}$, $\smash{\Utility_{\PlayerIndex} (\StratProfile; \ConditionValue) \in [\nicefrac{\UtilityRange}{2}, \nicefrac{\UtilityRange}{2}]}$, for some $\smash{\UtilityRange \in \mathbb{R}}$.
If {\/} $\smash{\GS (\ConditionalGame{\ConditionSpace}, \ConditionDistribution, \Indices, \SetOfPlayers \times \StratProfileSpace, \NumberOfSamples, \delta, \UtilityRange, \textsc{Bound})}$ outputs the pair $\smash{(\tilde{\Utility}, \hat{\epsilon})}$, then with probability at least $\smash{1 - \delta}$, it holds that $\smash{\sup_{(\PlayerIndex, \StratProfile) \in \Indices} \abs{\Utility_{\PlayerIndex} (\StratProfile; \ConditionDistribution) - \tilde{\Utility}_{\PlayerIndex} (\StratProfile)} \leq \hat{\epsilon}}$.
\end{theorem}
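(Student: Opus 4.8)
The plan is to observe that \GS{} is engineered so that, in each of its two branches, the returned additive error $\hat{\epsilon}$ coincides exactly with the right-hand side of a finite-sample bound already established earlier in the paper: \Cref{thm:Hoeffding}~(2) in the \textsc{Hoeffding} branch, and \Cref{thm:Rademacher}~(1) in the \textsc{1-ERA} branch. The theorem thus reduces to matching notation and invoking the appropriate prior result; there is essentially no new mathematical content beyond this packaging.

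First I would pin down the quantity to be bounded. By the empirical-utility assignment in \Cref{alg:gs}, the returned utilities satisfy $\tilde{\Utility}_\PlayerIndex(\StratProfile) = \hat{\Utility}_\PlayerIndex(\StratProfile; \Samples)$ for every $(\PlayerIndex, \StratProfile) \in \Indices$, where $\Samples \distributed \ConditionDistribution^\NumberOfSamples$ is the sample drawn at the start of the procedure. Consequently the target deviation $\sup_{(\PlayerIndex, \StratProfile) \in \Indices} \abs{\Utility_\PlayerIndex(\StratProfile; \ConditionDistribution) - \tilde{\Utility}_\PlayerIndex(\StratProfile)}$ is identical to $\sup_{(\PlayerIndex, \StratProfile) \in \Indices} \abs{\Utility_\PlayerIndex(\StratProfile; \ConditionDistribution) - \hat{\Utility}_\PlayerIndex(\StratProfile; \Samples)}$, which is precisely the quantity controlled by both \Cref{thm:Hoeffding} and \Cref{thm:Rademacher}. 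Moreover the boundedness hypothesis $\Utility_\PlayerIndex(\StratProfile; \ConditionValue) \in [-\nicefrac{\UtilityRange}{2}, \nicefrac{\UtilityRange}{2}]$ assumed here is exactly the hypothesis those theorems require, so they apply verbatim with the same $\delta$ and $\Indices$.

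I would then split on the value of \textsc{Bound}. In the \textsc{Hoeffding} branch, \GS{} sets $\hat{\epsilon} = \UtilityRange\sqrt{\nicefrac{\lnp{\slfrac{2\abs{\Indices}}{\delta}}}{2\NumberOfSamples}}$, which is verbatim the right-hand side of \Cref{thm:Hoeffding}~(2); hence that claim delivers the desired inequality with probability at least $1-\delta$ over $\Samples$. In the \textsc{1-ERA} branch, \GS{} first draws $\vsigma \distributed \Rademacher^\NumberOfSamples$, computes $r = \DERC{\NumberOfSamples}{\GameTuple, \Indices}{\Samples}{\vsigma}$, and sets $\hat{\epsilon} = 2r + 3\UtilityRange\sqrt{\nicefrac{\ln(\slfrac{1}{\delta})}{2\NumberOfSamples}}$, which matches the right-hand side of \Cref{thm:Rademacher}~(1) exactly; that claim then yields the bound with the same failure probability.

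The only point requiring care, and hence the nearest thing to an obstacle, is the source of randomness in the \textsc{1-ERA} branch: the $1$-ERA $r$ is computed from a single fresh draw of $\vsigma$, so the relevant probability is over the joint draw $(\Samples, \vsigma)$ rather than over $\Samples$ alone. I would therefore note explicitly that \Cref{thm:Rademacher}~(1) must be read as a statement over this joint draw, and that the $\delta$ supplied to \GS{} is the same $\delta$ appearing there, so that the $1-\delta$ guarantee transfers directly to the computed $\hat{\epsilon}$. With this observation recorded, both branches close the argument, and nothing remains beyond the bookkeeping just described.
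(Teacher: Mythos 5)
Your proposal is correct and matches the paper's own (one-line) proof, which simply invokes \Cref{thm:Hoeffding} in the \textsc{Hoeffding} branch and \Cref{thm:Rademacher} in the \textsc{1-ERA} branch. Your additional remark that the probability in the \textsc{1-ERA} branch is over the joint draw of $(\Samples, \vsigma)$ is a useful clarification the paper leaves implicit, but it does not change the argument.
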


\begin{proof}
The result follows from~\Cref{thm:Hoeffding} when $\textsc{Bound} = \textsc{Hoeffding}$ and~\Cref{thm:Rademacher} when $\textsc{Bound} = \textsc{1-ERA}$.
\end{proof}

\subsection{Progressive Sampling with Pruning}
\label{subsec:PSP}

Next, we present \PSP{} (\Cref{alg:psp_eqa}),
which, using \GS{} as a subroutine, draws progressively larger samples, refining the empirical game at each iteration, and stopping when the equilibria are approximated to the desired accuracy, or when the sampling budget is exhausted.
Although performance ultimately depends on a game's structure, \PSP{} can potentially learn equilibria using vastly fewer resources than \GS, when fewer data are necessary to learn to a desired degree of accuracy.

As the name suggests, \PSP{} is a pruning algorithm.  
The key idea is to prune (i.e., cease estimating the utilities of) strategy profiles that (w.h.p.) are provably not equilibria.
Recall that $\StratProfile \in \Nash_\epsilon (\GameTuple)$ iff $\Regret_\PlayerIndex (\GameTuple, \StratProfile) \le \epsilon$, for all $\PlayerIndex \in \SetOfPlayers$.
Thus, if there exists $\PlayerIndex \in \SetOfPlayers$ s.t.{} $\Regret_\PlayerIndex (\GameTuple, \StratProfile) > \epsilon$, then $\StratProfile \not\in \Nash_\epsilon (\GameTuple)$.
In the search for pure equilibria, such strategy profiles can be pruned.

For mixed equilibria, a strategy $\Strategy \in \StrategySet_\PlayerIndex$ is said to \mydef{$\epsilon$-dominate} another strategy $\Strategy' \in \StrategySet_\PlayerIndex$ if, for any pure strategy profile $\StratProfile$, taking $\StratProfile' = (\Strategy_1, \dots, \Strategy_{\PlayerIndex-1}, \Strategy', \Strategy_{\PlayerIndex+1}, \dots, \Strategy_{\NumberOfPlayers})$, it holds that $\Utility_\PlayerIndex(\StratProfile) + \epsilon \geq \Utility_\PlayerIndex(\StratProfile')$.
The \mydef{$\epsilon$-rationalizable} strategies $\Rationalizable_{\epsilon}(\GameTuple)$ are those that remain after iteratively removing all $\epsilon$-dominated strategies.  
Only strategies in $\Rationalizable_{\epsilon}(\GameTuple)$ can have nonzero weight in a mixed $\epsilon$-Nash equilibrium~\cite{gibbons1992game}; thus eliminating strategies not in $\Rationalizable_{\epsilon}(\GameTuple)$ is a natural pruning criterion.

If a strategy $\Strategy \in \StrategySet_\PlayerIndex$ is $\epsilon$-dominated by another strategy $\Strategy' \in \StrategySet_\PlayerIndex$, then $\PlayerIndex$ always regrets playing strategy $\Strategy$, regardless of other agents' strategies.
Consequently, the mixed pruning criterion is more conservative than the pure, so more pruning occurs when learning pure equilibria.  

Like \GS, \PSP{} takes in a conditional game $\ConditionalGame{\ConditionSpace}$, a black box from which we can sample distribution $\ConditionDistribution$, a utility range $\UtilityRange$, and a bound type {\sc Bound}.  
Instead of a single sample size, however, it takes in a \emph{sampling schedule} $\bm{M}$ in the form of a (possibly infinite) strictly increasing sequence of integers; and instead of a single failure probability, it takes in a \emph{failure probability schedule} $\bm{\delta}$, with each $\bm{\delta}_\TimeIndex$ in this sequence and their sum in $(0, 1)$.
These two schedules dictate the number of samples to draw and the failure probability to use at each iteration.
\PSP{} also takes in a boolean \textsc{Pure} that determines whether the output is pure (or mixed) equilibria, and an \emph{error threshold} $\epsilon$, which enables early termination as soon as equilibria of the desired sort are estimated to within the additive factor $\epsilon$.

\begin{algorithm}[htbp]
\algrenewcommand\algorithmicindent{1.0em}
\begin{algorithmic}[1]
\Procedure{PSP}{$\ConditionalGame{\ConditionSpace}, \ConditionDistribution, \SamplingSchedule, \bm{\delta}, \UtilityRange, \textsc{Bound}, \textsc{Pure}, \epsilon$} $\to ((\tilde{\Utility}, \tilde{\bm{\epsilon}}), (\hat{E}, \hat{\epsilon}), \hat{\delta})$
 
\State \Input 
Conditional game $\ConditionalGame{\ConditionSpace}$, black box 
from which we can sample distribution $\ConditionDistribution$, sampling schedule $\SamplingSchedule$, failure probability schedule $\bm{\delta}$, utility range $\UtilityRange$, bound type \textsc{Bound}, equilibrium type \textsc{Pure}, error threshold $\epsilon$.

\State \Output Empirical utilities $\tilde{\Utility}$, $\forall (\PlayerIndex, \StratProfile) \in \SetOfPlayers \times \StratProfileSpace$, utility error $\tilde{\bm{\epsilon}}$, empirical equilibria $\hat{E}$, equilibria error $\hat{\epsilon}$, failure probability $\hat{\delta}$.

\State $\Indices \gets \SetOfPlayers \times \StratProfileSpace$ \Comment{Initialize indices} \label{alg:psp:init-indices}

\State $(\tilde{\Utility}_{\PlayerIndex}(\StratProfile), \tilde{\bm{\epsilon}}_{\PlayerIndex}(\StratProfile)) \gets (0, \nicefrac{c}{2}), \forall (\PlayerIndex, \StratProfile) \in \Indices$
\Comment{Initialize outputs} \label{alg:psp:init-outputs}

\For{$\TimeIndex \in 1, \dots, \abs{\SamplingSchedule}$} %

    \State $(\tilde{\Utility}, \hat{\epsilon}) \gets \mathrm{\GS}(\ConditionalGame{\ConditionSpace}, \ConditionDistribution, \Indices, \SamplingSchedule_\TimeIndex, \bm{\delta}_\TimeIndex, \UtilityRange, \textsc{Bound})$ \label{alg:psp:utility-gs} \Comment{Improve utility estimates}

    \State $\tilde{\bm{\epsilon}}_{\PlayerIndex}(\StratProfile) \gets \hat{\epsilon}, \forall (\PlayerIndex,\StratProfile) \in \Indices$ \label{alg:psp:epsilon-gs} \Comment{Update confidence intervals}

    \If{$\hat{\epsilon} \leq \epsilon$ or $\TimeIndex = \abs{\SamplingSchedule}$} \Comment{Termination condition}
     \label{alg:psp:termination}

        \State $\hat{E} \gets \left\{ \begin{array}{rcl} \textsc{Pure} & : & \Nash_{2\hat{\epsilon}}(\tilde{\Utility}) \\ \neg\textsc{Pure} & : & \Nash^{\diamond}_{2\hat{\epsilon}}(\tilde{\Utility}) \\ \end{array} \right.$ %

        \if 0
        \If{\textsc{Pure}}
            \State $\hat{E} \gets \Nash_{2\hat{\epsilon}_{\TimeIndex}}(\hat{\Utility})$
        \Else
            \State $\hat{E} \gets \Nash^{\diamond}_{2\hat{\epsilon}_{\TimeIndex}}(\hat{\Utility})$
        \EndIf
        \fi

    	\State $\Return~((\tilde{\Utility}, \tilde{\bm{\epsilon}}), (\hat{E}, \hat{\epsilon}), \sum_{i=1}^\TimeIndex \bm{\delta}_i)$
    	\label{alg:psp:return}
    \EndIf

    \State $\Indices \gets \hspace{-0.05cm} \left\{ \hspace{-0.15cm} \begin{array}{rcl}
    \textsc{Pure} & \hspace{-0.25cm} : \hspace{-0.3cm} & \{(\PlayerIndex, \StratProfile) \in \Indices \mid \Regret_\PlayerIndex (\tilde{\Utility}, \StratProfile) \leq 2\hat{\epsilon} \} \\
    \neg\textsc{Pure} & \hspace{-0.25cm} : \hspace{-0.3cm} & \{(\PlayerIndex, \StratProfile) \in \Indices \mid \StratProfile_q \in \Rationalizable_{2\hat{\epsilon}}(\tilde{\Utility}), \forall q \in \SetOfPlayers \}
    \end{array} \right.$
     \label{alg:psp:pruning}
    \Comment{Prune}
    
    \if 0
    \If{\textsc{Pure}}
        \State $\Indices_{\TimeIndex} \gets \{(\PlayerIndex, \StratProfile) \ | \ \StratProfile_\PlayerIndex \in \Rationalizable_{2\hat{\epsilon}_{\TimeIndex}}(\hat{\Utility}) \}$ \Comment{Prune Suboptimal Responses}
    \Else
        \State \Comment{Prune Dominated Strategies}
    \Fi
    \fi

\EndFor

\EndProcedure
\end{algorithmic}

\caption{Progressive Sampling with Pruning}
\label{alg:psp_eqa}
\end{algorithm}

\if 0
\begin{definition}[$\epsilon$-Dominated Strategy]
A strategy $\Strategy \in \StrategySet_\PlayerIndex$ for agent $\PlayerIndex$ is said to $\epsilon$-\mydef{dominate} another strategy $\Strategy' \in \StrategySet_\PlayerIndex$ if for any pure strategy profile $\StratProfile$, taking $\StratProfile' = (\Strategy_1, \dots, \Strategy_{\PlayerIndex-1}, \Strategy, \Strategy_{\PlayerIndex+1}, \dots, \Strategy_{\NumberOfPlayers})$, it holds that $\Utility_\PlayerIndex(\StratProfile) \geq \Utility_\PlayerIndex(\StratProfile') + \epsilon$.
\end{definition}

\begin{definition}[Iterative Elimination of $\epsilon$-Dominated Strategies {[IEDS]}]
The \emph{$\epsilon$-rationalizable} strategies $\Rationalizable_{\epsilon}(\GameTuple)$ of game $\GameTuple$ are those that remain after iteratively removing all $\epsilon$-dominated strategies.
\end{definition}
\fi

\if 0
Only strategies in $\Rationalizable_{\epsilon}(\GameTuple)$ can have any weight in a mixed $\epsilon$-Nash equilibria~\cite{gibbons1992game}. Eliminating $\epsilon$-dominated strategies is thus a natural pruning criterion. The following definition is convenient when referring to games with pruned strategies.

\begin{definition}[Restricted Game]
Given game $\GameTuple$ and $\StratProfileSpace' \subseteq \StratProfileSpace(\GameTuple)$, define $\restrict(\GameTuple, \StratProfileSpace')$ as the game $\GameTuple$ that ignores  strategies $\StratProfileSpace \setminus \StratProfileSpace'$.
\end{definition}
\fi

\begin{comment}

However, for theorems 4.1 and 4.2 we just mentioned this in math in English}
\end{comment}

In the \PSP{} pseudocode, and in the following theorem, we overload the $\Regret$ and $\Nash$ operators (both pure and mixed) to depend on a utility function, rather than a game.

\begin{theorem}[Approximation Guarantees of Progressive Sampling with Pruning]
\label{thm:psp-guarantee}
Consider conditional game $\ConditionalGame{\ConditionSpace}$ together with distribution $\ConditionDistribution$ such that for all $\ConditionValue \in \ConditionSpace$ and $(\PlayerIndex, \StratProfile) \in \SetOfPlayers \times \StratProfileSpace$, $\Utility_{\PlayerIndex}(\StratProfile ; \ConditionValue) \in [-\nicefrac{\UtilityRange}{2}, \nicefrac{\UtilityRange}{2}]$, for some $\smash{\UtilityRange \in \mathbb{R}}$.  If {\/} $\PSP (\ConditionalGame{\ConditionSpace}, \ConditionDistribution, \SamplingSchedule, \bm{\delta}, \UtilityRange, \textsc{Bound}, \textsc{Pure}, \epsilon)$ outputs $\bigl((\tilde{\Utility}, \tilde{\bm{\epsilon}}), (\hat{E}, \hat{\epsilon}), \hat{\delta}\bigr)$, it holds that:

\begin{enumerate}
\item \label{thm:psp-guarantee:delta} $\hat{\delta} \leq \sum_{\bm{\delta}_\TimeIndex \in \bm{\delta}} \bm{\delta}_\TimeIndex, \hat{\delta} \in (0, 1)$.

\item \label{thm:psp-guarantee:epsilon} If $\lim_{\TimeIndex \to \infty} \ln(\nicefrac{1}{\bm{\delta}_\TimeIndex}) / \SamplingSchedule_\TimeIndex = 0$, then $\hat{\epsilon} < \epsilon$.
\end{enumerate}

\noindent
Furthermore, if {\/} \PSP{} terminates, then with probability at least $1 - \hat{\delta}$, the following hold simultaneously:

\begin{enumerate}
\setcounter{enumi}{2}

\item \label{thm:psp-guarantee:utility}
$\abs{\Utility_{\PlayerIndex}(\StratProfile; \ConditionDistribution) - \tilde{\Utility}_{\PlayerIndex}(\StratProfile)} \leq \tilde{\bm{\epsilon}}_{\PlayerIndex}(\StratProfile)$, for all $(\PlayerIndex, \StratProfile) \in \SetOfPlayers \times \StratProfileSpace$.

\item \label{thm:psp-guarantee:pure-nash} If \textsc{Pure}, then $\Nash(\Utility) \subseteq \Nash_{2\hat{\epsilon}}(\tilde{\Utility}) \subseteq \Nash_{4\hat{\epsilon}}(\Utility)$.

\item \label{thm:psp-guarantee:mixed-nash} If $\neg$\textsc{Pure}, then $\Nash^{\diamond}(\Utility) \subseteq \Nash^{\smash{\diamond}}_{\smash{2\hat{\epsilon}}}(\tilde{\Utility}) \subseteq \Nash^{\smash{\diamond}}_{\smash{4\hat{\epsilon}}}(\Utility)$.

\end{enumerate}
\end{theorem}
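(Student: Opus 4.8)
The plan is to take the five claims in order, treating the two schedule claims (\ref{thm:psp-guarantee:delta} and \ref{thm:psp-guarantee:epsilon}) as bookkeeping, the uniform utility bound (\ref{thm:psp-guarantee:utility}) as a union bound over iterations, and the two equilibrium containments (\ref{thm:psp-guarantee:pure-nash} and \ref{thm:psp-guarantee:mixed-nash}) as consequences of claim~\ref{thm:psp-guarantee:utility} together with \Cref{thm:equilibria-approximation-bounds}, once pruning is shown to be safe. For claim~\ref{thm:psp-guarantee:delta}, note that \PSP{} returns $\hat{\delta} = \sum_{i=1}^{T} \bm{\delta}_i$, where $T$ is the terminating iteration; this prefix is positive and bounded above by the full sum $\sum_{\bm{\delta}_\TimeIndex \in \bm{\delta}} \bm{\delta}_\TimeIndex$, which lies in $(0,1)$ by the standing assumption on the schedule. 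For claim~\ref{thm:psp-guarantee:epsilon}, I would show the per-iteration error $\hat{\epsilon}_\TimeIndex$ tends to $0$, so that the test $\hat{\epsilon} \le \epsilon$ fires at some finite iteration with $\hat{\epsilon} < \epsilon$. With the Hoeffding bound this is deterministic: $\hat{\epsilon}_\TimeIndex = \UtilityRange\sqrt{\bigl(\ln(2\abs{\Indices_\TimeIndex}) + \ln(\nicefrac{1}{\bm{\delta}_\TimeIndex})\bigr)/(2\SamplingSchedule_\TimeIndex)}$, and since $\abs{\Indices_\TimeIndex}$ is bounded while $\SamplingSchedule_\TimeIndex \to \infty$, the first summand vanishes and the second vanishes by the limit hypothesis. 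For the 1-ERA bound the same conclusion follows from the \emph{a priori} RA upper bound of \Cref{thm:Rademacher}(2); I would flag that, because the realized 1-ERA is data-dependent, that direction is properly a high-probability rather than a purely deterministic statement.

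For claim~\ref{thm:psp-guarantee:utility}, I would define, for each iteration $\TimeIndex \le T$, the event $G_\TimeIndex$ that the \GS{} call on the active index set $\Indices_\TimeIndex$ meets its guarantee, i.e. $\sup_{(\PlayerIndex,\StratProfile) \in \Indices_\TimeIndex} \abs{\Utility_\PlayerIndex(\StratProfile; \ConditionDistribution) - \tilde{\Utility}_\PlayerIndex(\StratProfile)} \le \hat{\epsilon}_\TimeIndex$; by \Cref{thm:gs-guarantee}, $\Probp{\neg G_\TimeIndex} \le \bm{\delta}_\TimeIndex$. A union bound then gives $\Probp{\bigcup_{\TimeIndex \le T} \neg G_\TimeIndex} \le \sum_{\TimeIndex \le T} \bm{\delta}_\TimeIndex = \hat{\delta}$. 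On the good event $\bigcap_{\TimeIndex \le T} G_\TimeIndex$, every index $(\PlayerIndex,\StratProfile)$ inherits its bound from the last iteration in which it was active --- precisely the iteration at which its stored values $\tilde{\Utility}_\PlayerIndex(\StratProfile)$ and $\tilde{\bm{\epsilon}}_\PlayerIndex(\StratProfile)$ were frozen --- so $\abs{\Utility_\PlayerIndex(\StratProfile;\ConditionDistribution) - \tilde{\Utility}_\PlayerIndex(\StratProfile)} \le \tilde{\bm{\epsilon}}_\PlayerIndex(\StratProfile)$ for every $(\PlayerIndex,\StratProfile) \in \SetOfPlayers \times \StratProfileSpace$.

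Claims~\ref{thm:psp-guarantee:pure-nash} and~\ref{thm:psp-guarantee:mixed-nash} are then argued on this same good event. The indices surviving to termination, $\Indices_T$, carry error at most $\hat{\epsilon} = \hat{\epsilon}_T$, so on them $\tilde{\Utility}$ is a uniform $\hat{\epsilon}$-approximation of $\Utility(\cdot;\ConditionDistribution)$, and I would replay the two inequalities of \Cref{thm:equilibria-approximation-bounds} (with $\gamma=0$ and $\gamma=2\hat{\epsilon}$) to obtain $\Nash(\Utility) \subseteq \Nash_{2\hat{\epsilon}}(\tilde{\Utility}) \subseteq \Nash_{4\hat{\epsilon}}(\Utility)$, and likewise for $\Nash^{\diamond}$. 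The new ingredient is \emph{pruning safety}: no profile relevant to a containment is ever pruned. In the pure case, a pruned pair $(\PlayerIndex,\StratProfile)$ has $\Regret_\PlayerIndex(\tilde{\Utility},\StratProfile) > 2\hat{\epsilon}_\TimeIndex$ at its pruning iteration, whence on the good event $\Regret_\PlayerIndex(\Utility,\StratProfile) > 0$, so $\StratProfile \notin \Nash(\Utility)$; conversely any $\StratProfile \in \Nash_{2\hat{\epsilon}}(\tilde{\Utility})$ must have survived, so its empirical regret is read off from fresh estimates. In the mixed case, the analogue is that a pruned strategy lies outside $\Rationalizable_{2\hat{\epsilon}}(\Utility)$ and therefore carries zero weight in every mixed $2\hat{\epsilon}$-Nash equilibrium, by the rationalizability fact cited in the text.

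The main obstacle is the interaction between pruning and stale estimates inside the regret and best-response computations that define the containments. Since a surviving profile's regret is a supremum over deviation profiles, some of which may have been pruned (and so carry frozen, lower-accuracy estimates), I must argue that these stale values never realize the relevant suprema. The point is that a pruned deviation was suboptimal by a margin exceeding twice its error, so on the good event its frozen estimate stays at or below the freshly estimated best-response value and hence cannot corrupt the regret of a surviving profile. Making this rigorous hinges on a monotonicity property of the per-iteration errors $\hat{\epsilon}_\TimeIndex$ (so that any index pruned earlier has error at least $\hat{\epsilon}_T$), or, absent that, on explicitly tracking the best error attained so far; this is where I expect the bookkeeping to be most delicate. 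By contrast, the equilibrium-containment algebra itself is a direct transcription of the chain of inequalities already established in \Cref{thm:equilibria-approximation-bounds}.
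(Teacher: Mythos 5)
Your proposal follows essentially the same route as the paper's proof: the partial-sum argument for the failure probability, the observation that both bounds decay as $\mathcal{O}(\sqrt{\ln(\nicefrac{1}{\bm{\delta}_\TimeIndex})/\SamplingSchedule_\TimeIndex})$ for termination, a union bound over the per-iteration \GS{} guarantees for the uniform utility bound, and pruning safety (positive true regret in the pure case, non-rationalizability in the mixed case) combined with \Cref{thm:equilibria-approximation-bounds} for the two containments. If anything you are more explicit than the paper about the one genuinely delicate point --- that stale estimates of pruned indices enter the regret suprema of surviving profiles and must be shown never to realize them --- which the paper dispatches in a single sentence ("these will never be a best response for any agent from any strategy profile") without the error-monotonicity bookkeeping you correctly identify as necessary.
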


\begin{proof}

To see 1, note that $\hat{\delta}$ is computed on line \ref{alg:psp:return} as a partial sum of $\bm{\delta}$, each addend and the sum of which are all by assumption on $(0, 1)$; thus the result holds.

To see 2, note that if $\smash{\lim_{\TimeIndex \to \infty} \ln(\nicefrac{1}{\bm{\delta}_\TimeIndex}) / \SamplingSchedule_\TimeIndex = 0}$, then both the Hoeffding and Rademacher bounds employed by \GS{} tend to 0, as both decay asymptotically (in expectation) as $\smash{\mathcal{O}( \sqrt{\nicefrac{\ln(\slfrac{1}{\delta_\TimeIndex})}{\bm{M}_\TimeIndex}})}$ (see Theorems~\ref{thm:Hoeffding} and \ref{thm:Rademacher}). For infinite sampling schedules, the termination condition of line \ref{alg:psp:termination} ($\hat{\epsilon} \leq \epsilon$) is eventually met, as $\hat{\epsilon}$ is the output of \GS, and thus 2 holds.

To establish 3, we show the following:
assuming termination occurs at timestep $n$, with probability at least $1 - \hat{\delta}$, at every $\TimeIndex$ in $\{1, \dots, n\}$, it holds that $\smash{\sup_{(\PlayerIndex, \StratProfile) \in \SetOfPlayers \times \StratProfileSpace} \lvert{\Utility_\PlayerIndex(\StratProfile; \ConditionDistribution) - \tilde{\Utility}_{\PlayerIndex}(\StratProfile)}\rvert \leq \tilde{\bm{\epsilon}}_\PlayerIndex(\StratProfile)}$.
This property follows from the \GS{} guarantees of \Cref{thm:gs-guarantee}, as at each timestep $\TimeIndex$, the guarantee holds with probability at least $1 - \bm{\delta}_\TimeIndex$; thus by a union bound, the guarantees hold simultaneously at all time steps with probability at least $1 - \sum_{i=1}^n \bm{\delta}_\TimeIndex = 1 - \hat{\delta}$.
That the GS guarantees hold for unpruned indices should be clear; for pruned indices, since only error bounds for indices updated on line \ref{alg:psp:utility-gs} are tightened on line \ref{alg:psp:epsilon-gs}, it holds by the \GS{} guarantees of previous iterations.

Without pruning, 4 and 5 would follow directly from 3 via \Cref{thm:equilibria-approximation-bounds}, but with pruning, the situation is a bit more involved.

To see 4, observe that at each time step, only indices ($\PlayerIndex$, $\StratProfile$) such that $\smash{\Regret_\PlayerIndex \bigl( \tilde{\Utility} (\cdot), \StratProfile \bigr) > 2\hat{\epsilon}}$ are pruned (line \ref{alg:psp:pruning}), thus we may guarantee that with probability at least $\smash{1 - \hat{\delta}}$, $\smash{\Regret_\PlayerIndex \bigl( \Utility (\cdot; \ConditionDistribution), \StratProfile \bigr) > 0}$.  Increasing the accuracy of the estimates of these strategy profiles is thus not necessary, as they do not comprise pure equilibria (w.h.p.), and they will never be required to refute equilibria, as these will never be a best response for any agent from any strategy profile.

5 follows similarly, except that nonzero regret implies that a pure strategy profile is not a pure Nash equilibrium, but it does \emph{not\/} imply that it is \emph{not\/} part of any mixed Nash equilibrium.  Consequently, we use the more conservative pruning criterion of strategic dominance (a strategy is dominated if it is not rationalizable), requiring $2\epsilon$-dominance in $\tilde{\Utility}$, as this implies nonzero dominance
in $\Utility$.
\end{proof}

\if 0
\begin{proof}

\ref{thm:psp-guarantee:delta} follows straightforwardly, as $\hat{\delta}$ is a sum over some subsequence of $\bm{\delta}$, with each term and their sum in $(0, 1)$, by assumption.

\ref{thm:psp-guarantee:epsilon} is more subtle, but note that under these conditions, the Rademacher and Hoeffding bounds used in \GS{} tend to 0 asymptotically, as both decay as $\mathcal{O}( \sqrt{\nicefrac{\ln(\slfrac{1}{\delta_\TimeIndex})}{\bm{M}_\TimeIndex}})$; thus, eventually \PSP{} reaches a sufficient sample size so that it can terminate having met the desired accuracy.

For each of \ref{thm:psp-guarantee:utility}, \ref{thm:psp-guarantee:pure-nash}, and \ref{thm:psp-guarantee:mixed-nash}, we rely on the conclusions of \Cref{thm:gs-guarantee} holding simultaneously across iterations, which, by a union bound, hold with probability at least $1 - \smash{\hat{\delta}}$.
To see \ref{thm:psp-guarantee:utility}, note that 
$\hat{\Utility}_{\PlayerIndex}(\StratProfile)$ and $\hat{\bm{\epsilon}}_{\PlayerIndex}(\StratProfile)$ are the \GS{} outputs,
set on lines \ref{alg:psp:utility-gs} and \ref{alg:psp:epsilon-gs}, and propagated from previous iterations for all pruned indices.
As all \GS{} guarantees hold simultaneously, so does each bound of \ref{thm:psp-guarantee:utility}.

We now show \ref{thm:psp-guarantee:pure-nash}.  By induction, after each iteration $\TimeIndex$ of \PSP, for all $(\PlayerIndex, \StratProfile) \in \SetOfPlayers \times \StratProfileSpace$, it holds that
\if 0
\[
\sup_{\StratProfile' \mid \StratProfile'_q = \StratProfile_q, \forall q \neq \PlayerIndex} \left|  \Utility_{\PlayerIndex} (\StratProfile'; \ConditionDistribution) - \hat{\Utility}_{\PlayerIndex} (\StratProfile') \right| \leq \eta
 \enspace \text{and} \enspace
\Bigl\lvert \sup_{\StratProfile' \mid \StratProfile'_q = \StratProfile_q, \forall q \neq \PlayerIndex} \Utility_{\PlayerIndex} (\StratProfile'; \ConditionDistribution) - \hat{\Utility}_{\PlayerIndex} (\StratProfile') \Bigr\rvert \leq \eta \enspace,
\]

which together imply
\fi
\[
\abs{\Regret(\Utility(\cdot; \ConditionDistribution), \PlayerIndex, \StratProfile) - \Regret(\hat{\Utility}, \PlayerIndex, \StratProfile)} \leq 2\eta \enspace.
\]
This trivially holds after initialization (line \ref{alg:psp:init-outputs}), taking $\eta = c$.

To prove \ref{thm:psp-guarantee:mixed-nash}, the reasoning is the same, except that now, $\Regret(\hat{\Utility}, \PlayerIndex, \StratProfile) \geq 2\epsilon$ does not necessarily imply that $\StratProfile$ is irrelevant to equilibrium computation, as it may still be played in a mixed equilibrium.  But if a strategy $\Strategy$ of agent $\PlayerIndex$ is not $2\epsilon$-rationalizable, then it is not part of any equilibrium, so we prune all strategies $\Strategy$ that are provably not $2\epsilon$-rationalizable; that is, all $(\PlayerIndex, \StratProfile)$ such that $\StratProfile_\PlayerIndex = \Strategy$.
\end{proof}
\fi

\if 0
We now show \ref{thm:psp-guarantee:utility}.  
To see this result, consider that as at each iteration, we have that $\norm{\restrict(\Gamma_\ConditionDistribution, \hat{S}_t) - \hat{\Gamma}_t}_\infty \leq \epsilon_t$, thus no pure strategy that is rational in $\Gamma_\ConditionDistribution$ could be $2\epsilon$-irrational in $\hat{\Gamma}_t$.  We thus conclude that pruning removes only strategies that are irrational in $\Gamma_\ConditionDistribution$, thus the first desideratum holds.

\ref{thm:psp-guarantee:norm} is a direct consequence of the most terminal application of global sampling.  \ref{thm:psp-guarantee:nash} follows similarly, noting that no Nash equilibrium of $\Gamma_\ConditionDistribution$ will ever play any dominated strategy.
\fi

Finally, we propose two possible sampling and failure probability schedules for \PSP, $\bm{M}$ and $\bm{\delta}$, depending on whether the sampling budget is finite or infinite.  Given a finite sampling budget $m < \infty$, a neutral choice is to take $\bm{M}$ to be a doubling sequence such that $\sum_{M_i \in \bm{M}} M_i \leq m$, with $\bm{M}_1$ sufficiently large so as to possibly permit pruning after the first iteration (iterations that neither prune nor achieve $\epsilon$-accuracy are effectively wasted), and to take $\bm{\delta}_\TimeIndex = \nicefrac{\delta}{\abs{\bm{M}}}$, where $\delta$ is some maximum tolerable failure probability.  This strategy %
may fail to produce the desired $\epsilon$-approximation, as it may exhaust the sampling budget first.  
If we want to guarantee a particular $\epsilon$-$\delta$-approximation, then we can take $\bm{M}$ to be an infinite doubling sequence, and $\bm{\delta}$ to be a geometrically decreasing sequence such that $\sum_{\TimeIndex=1}^\infty \bm{\delta}_\TimeIndex = \delta$, for which the conditions of Theorem~\ref{thm:psp-guarantee} (\ref{thm:psp-guarantee:epsilon}) hold. 

\if 0
\begin{remark}
[Estimating equilibria in pure-strategy games.]
Note that while \PSP{} is valid for mixed as well as pure strategy games, in the pure strategy case, there are simpler sufficient conditions that still guarantee $\epsilon$-Nash equilibria are estimated accurately.  In particular, we need only ensure that the $\epsilon$-best responses for $\PlayerIndex$ at $\StratProfile$ are $\epsilon$-accurately estimated, for all agents $\PlayerIndex$ and all strategy profiles $\StratProfile$.  Then, the $\epsilon$-Nash equilibria are the set of $\epsilon$-best responses from which no agent $\PlayerIndex$ deviates.  Using this pruning criterion, rather than undominated strategies, we can improve our estimates by using \GS{} to estimate and bound based only on a subset of strategy profiles.  This modification can result in additional pruning, requiring less work and yielding tighter bounds, in the case of pure strategy games.
\end{remark}
\fi

\section{Experiments}
\label{sec:expts}

In this section, we describe the performance of our learning algorithms, experimentally.
We demonstrate that they consistently outperform their theoretical guarantees, in multiple experiments with random games, under various noise conditions.
Moreover, we show that in cases of interest, \PSP{} consistently requires fewer data than \GS.

We start by describing the games we designed for these experiments; we then present our findings. We use the notation $U[a, b]$, where $a < b$, to denote the uniform distribution, both in the discrete case ($a, b \in \N$)
and in the continuous case ($a, b \in \R$). Unless otherwise noted (i.e., in the empirical success rate experiments), $\delta$ is fixed at $0.1$ throughout. %

\subsection{Experimental Setup}
\label{subsec:experimentalSetup}

\paragraph{Uniform Random Games.} 
For $k \in \mathbb{Z}_+, u_0 \in \mathbb{R}$, a game $\GameTuple$ drawn from a uniform random game distribution $\RandomGame(\NumberOfPlayers, k, u_0)$ has $\NumberOfPlayers$ agents where each agent $\PlayerIndex$ has $k$ pure strategies. The utilities for each agent $\PlayerIndex$ in each profile $\StratProfile$ are drawn i.i.d.\ as $\smash{\Utility_\PlayerIndex (\StratProfile) \sim U(\nicefrac{-u_0}{2}, \nicefrac{u_0}{2})}$, where $u_0 > 0$ is a parameter controlling the magnitude of the utilities. We set $u_0 = 10$.

\paragraph{Finite Congestion Games.} 
Congestion games are a class of games known to exhibit pure strategy Nash equilibria~\cite{rosenthal1973class}. We use this class of games in our experiments for precisely this reason, so that we can more easily identify equilibria.

A tuple $\smash{\CongestionGame = (\SetOfPlayers, \SetOfFacilities, \{ \CongestionSetOfStrategies_\PlayerIndex \mid \PlayerIndex \in \SetOfPlayers \}, \{ \FacilityCostFunction_\FacilityIndex \mid \FacilityIndex \in \SetOfFacilities \})}$ is a \mydef{congestion game}~\cite{christodoulou2005price}, where $\smash{\SetOfPlayers = \{1, \ldots, \NumberOfPlayers\}}$ is a set of agents and $\smash{\SetOfFacilities = \{1, \ldots, \NumberOfFacilities\}}$ is a set of facilities. %
A strategy $\smash{\Strategy_\PlayerIndex \in \CongestionSetOfStrategies_\PlayerIndex} \subseteq 2^\SetOfFacilities$ is a set of facilities, and $\smash{\FacilityCostFunction_\FacilityIndex}$ is a (universal: i.e., non-agent specific) cost function associated with facility $\FacilityIndex$. 

In a finite congestion game, each agent prefers to select, among all their available strategies, one that minimizes their cost. Given a profile of strategies $\StratProfile$, agent $\PlayerIndex$'s cost is defined as $\smash{\PlayerCost_\PlayerIndex(\StratProfile) = \sum_{\FacilityIndex\in\Strategy_\PlayerIndex} \FacilityCostFunction_\FacilityIndex(n_\FacilityIndex(\StratProfile))}$, where $n_\FacilityIndex(\StratProfile)$ is the number of agents who select facility $\FacilityIndex$ in $\StratProfile$. We experiment with simple cost functions of the form $\FacilityCostFunction_\FacilityIndex(n) = n$.

For $\NumberOfPlayers \in \mathbb{Z}_+, \NumberOfFacilities \in \mathbb{Z}_+, k \in \mathbb{Z}_+$ such that $k \le 2^{\NumberOfFacilities} -1$,%
\footnote{We constrain $k \le 2^{\NumberOfFacilities} -1$, not $k \le 2^{\NumberOfFacilities}$, so that no agent's strategy set is empty.} 
a finite congestion game $\CongestionGame$ drawn from a random congestion distribution $\smash{\RandomCongestionGame(\NumberOfPlayers, \NumberOfFacilities, k)}$ is played by $\NumberOfPlayers$ agents on $\NumberOfFacilities$ facilities.
Each agent $\PlayerIndex$ is assigned a random number $\smash{\abs{\StrategySet_\PlayerIndex} \sim U[1, k]}$ of pure strategies. Each pure strategy $\Strategy_{\PlayerIndex, j} \in \StrategySet_\PlayerIndex$ is constructed as follows. The probability that facility $\FacilityIndex \in \{1, \ldots, \NumberOfFacilities\}$ is included in $\Strategy_{\PlayerIndex, j}$ is given by the power law $\smash{\mathbb{P} [\FacilityIndex \in \Strategy_{\PlayerIndex, j}] = \alpha^{-\FacilityIndex}}$, where $\alpha \in [0,1]$. We report results for $\alpha = 0.1$. This distribution is a simple but useful model of situations where agents' preferences are clustered around a few preferred facilities. Power law models abound in the literature; they are used to model distributions of population in cities, of incomes, and of species among genera, among many other applications~\cite{simon1955class, mitzenmacher2004brief}.

\paragraph{Noise simulation.} To test our algorithms, we query a black-box simulator to obtain noisy samples of the agents' utilities. Given a game $\GameTuple$, a sample of
agent $\PlayerIndex$'s utility at strategy profile $\StratProfile$ is given by $\smash{\Utility_\PlayerIndex(\StratProfile, \ConditionValue) = \Utility_\PlayerIndex(\StratProfile) + \ConditionValue}$, where $\smash{\ConditionValue \sim U(\nicefrac{-\NoiseCondition}{2}, \nicefrac{\NoiseCondition}{2})}$ and i.i.d..
The parameter $\NoiseCondition$ controls the magnitude of the noise. 
This simple model is meant to capture situations 
where it is hard or even impossible to make accurate distributional assumptions about noise.

\subsection{Experimental Results}
\paragraph{$\epsilon$ as a function of the number of samples.}
For these experiments, we ran the \GS{} algorithm using 1-ERA bounds on 200 games drawn at random from $\smash{\RandomCongestionGame(5, 5, 2)}$ using different levels of noise, namely $\NoiseCondition \in \{2,5,10\}$.
In~\Cref{fig:1ERA-GS-Performance} (a), we report 95\% confidence intervals around the error tolerance $\epsilon$, as a function of the number of samples $m$.
We see that the average error $\epsilon$ decreases roughly as $\nicefrac{1}{\sqrt{m}}$.
These findings are typical of games drawn from distribution $\smash{\RandomCongestionGame}$ with $\smash{\NumberOfPlayers \in \{2,\ldots,10\}}$, $m = \{2,\ldots,5\}$, and $k \in \{1,\ldots,4\}$.

Similar findings were obtained for games drawn from distribution $\RandomGame$.

\begin{comment}

In Figure~\ref{fig:samplesVepsilonRandomGame} we report same metrics as in Figure~\ref{fig:samplesVepsilonCongestionGame}, but for a fixed random game drawn from $\RandomGame(5,5, 2)$. Here again, the behavior is typical of a game drawn from distribution $\RandomGame(5,5,2)$.
\end{comment}

\begin{figure}[ht]
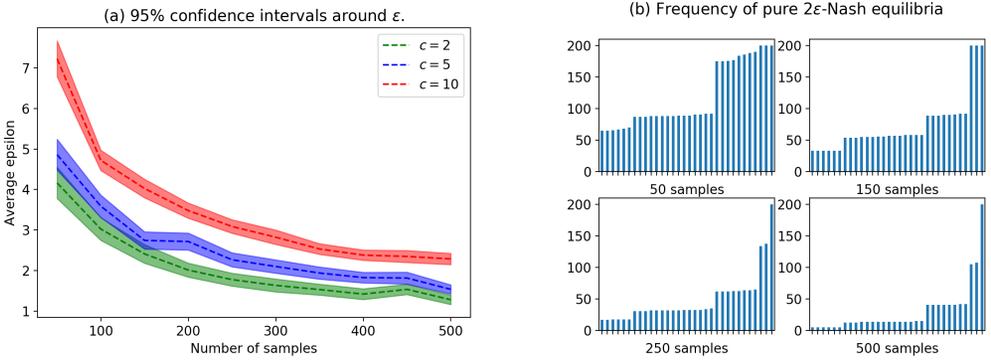

    \centering
    \begin{subfigure}[b]{0.475\textwidth}
        \centering
        \vspace{0.25cm}
        \includegraphics[width=\textwidth]{experiments/graphs/I7ISX8.png}
    \end{subfigure}    
    \mbox{}\hspace{5mm}
    \begin{subfigure}[b]{0.475\textwidth}
        \centering
        \vspace{0.25cm}
        \includegraphics[width=\textwidth]{experiments/graphs/7IMQS9_freq_nash_plot.png}
    \end{subfigure}
    \vspace{-0.75cm}
    \caption{Performance of \GS{} with 1-ERA bounds.
    (a) 95\% confidence intervals around $\epsilon$ for games drawn from $\smash{\RandomCongestionGame(5, 5, 2)}$ using different levels of noise, namely $\NoiseCondition \in \{2,5,10\}$.
    (b) Frequency of pure $2\epsilon$-Nash equilibria in one game with 32 strategy profiles drawn from distribution $\RandomCongestionGame(5,5,2)$ with noise parameter $\NoiseCondition = 2$.}
    \label{fig:1ERA-GS-Performance}
\end{figure}

\begin{comment}
\begin{figure}
    \centering
    \includegraphics[width=0.25\textwidth]{experiments/graphs/I7ISX8.png}
    \captionsetup{justification=centering}
    \caption{Average error as a function of the number of samples, for \GS{}.}
    \label{fig:samplesVepsilonCongestionGame}
\end{figure}
\end{comment}

\begin{comment}
    \begin{figure}
        \centering
        \includegraphics[width=0.35\textwidth]{experiments/graphs/XB05FF.png}
        \captionsetup{justification=centering}  
        \caption{95\% confidence intervals around $\epsilon$ produced by \GS{} as a function of the number of samples.}
        \label{fig:samplesVepsilonRandomGame}
    \end{figure}
\end{comment}

\begin{comment}
\begin{figure}
    \centering
    \includegraphics[width=0.45\textwidth]{experiments/graphs/QE260V_freq_nash_plot.png}
    \captionsetup{justification=centering}
    \caption{Frequency of pure $2\epsilon$-Nash equilibria when feeding different sample sizes to \GS{} over 200 runs.}
    \label{fig:nashHistogramRandomGame}
\end{figure}
\end{comment}

\paragraph{$2\epsilon$-Nash learning as a function of the number of samples.}
In these experiments, we ran the GS algorithm using 1-ERA bounds on one game with 32 strategy profiles drawn from distribution $\RandomCongestionGame(5,5,2)$ with noise parameter $\NoiseCondition = 2$. This game has a unique pure Nash equilibrium.

\Cref{fig:1ERA-GS-Performance} (b) depicts four histograms, corresponding to four different sample sizes, each one plotting the frequencies of strategy profiles deemed pure $2\epsilon$-Nash equilibria, using exhaustive search and applying Theorem~\ref{thm:gs-guarantee} to the error returned by \GS. The profiles not shown had zero frequency. The unique pure Nash equilibrium is correctly identified in all cases. Moreover, as the number of samples increases, the frequency of false positives (i.e., profiles that are not Nash but are deemed so by \GS) decreases.

\begin{comment}

Moreover, we have observed through multiple draws from $\RandomGame$ and $\RandomCongestionGame$ distributions, as well as multiple runs of \GS, that the true Nash equilibria are consistently identified almost always, despite the fact that our guarantees are only for $1 - \delta = 0.9$ fraction of time.
\end{comment}

\begin{comment}
Figure~\ref{fig:nashHistogramRandomGame} is a similar collection of histogram as shown in Figure~\ref{fig:nashHistogramCongestionGame}, but for a bigger game drawn from distribution $\RandomGame(4, 2, 10)$ with a total of 16 strategy profiles. Note that for a small number of samples (50), the algorithm considers every strategy profile as an $\epsilon$-Nash equilibrium at least once. As the number of samples increases, fewer strategies are considered equilibria (strategies not shown in the histogram had zero frequency). For a relatively large number of samples (500) the algorithm has narrow down all the possible Nash equilibria to only 4 strategy profiles. Note that the game contains two pure strategy profiles: (0,0,0,1) and (1,1,0,0).
\end{comment}

\paragraph{Empirical success rate.}
In these experiments, we ran the \GS{} algorithm using 1-ERA bounds and Hoeffding's bound with a Bonferroni correction on 200 games drawn at random from $\RandomCongestionGame(3, 3, 2)$ and another 200 games drawn at random from $\RandomGame(3,3)$, both with $\NoiseCondition = 5$. 

In~\Cref{fig:GSvsPSP} (a), we report 95\% confidence intervals around the empirical success rate, meaning the ratio of the number of times the algorithm's output satisfies $\Nash(\InducedGame{\ConditionDistribution})  
    \subseteq 
\Nash_{2\epsilon}(\hat{\GameTuple}_{\Samples})$ and 
$\Nash_{2\epsilon}(\hat{\GameTuple}_{\Samples}) 
    \subseteq 
\Nash_{4\epsilon}(\InducedGame{\ConditionDistribution})$ to the total number of times it executes. As per our guarantees, this rate should be above the black line $1 - \delta$ (and it is). Note that we test these containments only for pure $\epsilon$-Nash equilibria.

To test the extent to which the bounds computed by \GS{} are tight, we contract the value of $\epsilon$ output by \GS{} by a factor $\rho \in \{1.0, 0.875, 0.75, 0.625, 0.5\}$. Even so, \GS{} consistently meets its $1-\delta$ guarantee, and in these experiments design, it is very often the case that guarantees 
are met with empirical probability close to 1, except when $\rho = 0.5$ and $\delta < 0.15$. 

\paragraph{Evaluation of \PSP} 
We now compare the performance of \GS{} to \PSP{} with a doubling schedule initialized at 100 samples. (As usual, we fix $\delta = 0.1$.) Since our experimental testbed only includes small games, we present results only for the (Bonferroni) correction-based variants of these algorithms, leaving experiments using Rademacher averages for future work.
Our procedure to fairly compare them is as follows. We first run \PSP{} with $\epsilon = 0$, thereby forcing the algorithm to run to completion. In this way, we obtain the error rate corresponding to \PSP's final \GS{} invocation. We call this quantity $\smash{\hat{\epsilon}}_{\PSP}$, and we denote by $M_{\PSP}$ the total number of samples used upon termination.
Now, since \GS{} takes in a number of samples which is used uniformly across all players and strategy profiles %
to produce error rate $\smash{\hat{\epsilon}}_{\GS}$, we compute $\smash{\hat{\epsilon}}_{\GS}$ 
assuming \GS{} runs on $\nicefrac{M_{\PSP}}{\SizeOfGame{\GameTuple}}$ samples. In other words, we assume a uniform budget of $\nicefrac{M_{\PSP}}{\SizeOfGame{\GameTuple}}$ samples per $\Utility_\PlayerIndex(\StratProfile)$. This strategy guarantees both algorithms use the same total number of samples. 

\Cref{fig:GSvsPSP} (b) depicts $\smash{\hat{\epsilon}}$, the approximation guarantee obtained for both algorithms, as a function of the number of samples, summarizing the results of experiments with 200 games of different sizes drawn from a $\RandomGame$ distribution, with $|P| \in \{ 2, 3, 4, 5 \}$ and $k \in \{ 2, 3, 4, 5 \}$.%
\footnote{Similar behavior was observed for $\RandomCongestionGame$ with with $|P| \in \{ 5, 6, 7, 8 \}$, $|E| \in \{ 2, 3, 4, 5 \}$, and $k = 2$.}
Game size, $\SizeOfGame{\GameTuple}$, is the number of parameters, meaning the number of utilities to be estimated, which equals the number of players times the number of strategy profiles.
\begin{comment}

\end{comment}
The plot shows $\smash{\hat{\epsilon}}_{\PSP}$ in orange and $\smash{\hat{\epsilon}}_{\GS}$ in grey, on a logarithmic scale, thereby demonstrating that, for the same number of samples, \PSP{} yields better approximations: i.e., smaller values of $\smash{\hat{\epsilon}}$. Note, however, that \PSP{} relies on having sufficiently many strategy profiles to prune in the first place. For smaller games $\SizeOfGame{\GameTuple} \in \{8, 24\}$ and relatively small values of $\epsilon$, \GS{} produces slightly better approximations, because there is relatively little opportunity for pruning. For larger games $\SizeOfGame{\GameTuple} \in \{324, 1024\}$, \PSP{} consistently yields better approximations, and moreover, the error rate improves as a function of the size of the game.

\begin{figure}
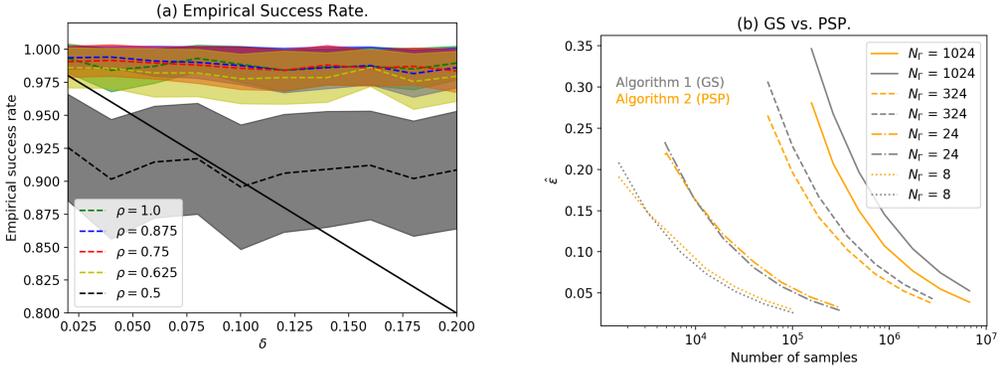

    \centering
    \begin{subfigure}[b]{0.475\textwidth}
        \centering
        \vspace{-0.3cm}
        \includegraphics[width=\textwidth]{experiments/graphs/failure_prob.png}
        \vspace{-0.25cm}
    \end{subfigure}
    \mbox{}\hspace{5mm}
    \begin{subfigure}[b]{0.475\textwidth}
        \centering
        \vspace{-0.3cm}
        \includegraphics[width=\textwidth]{experiments/graphs/GSvPSP.png}
        \vspace{-0.25cm}
    \end{subfigure}
    \vspace{-0.75cm}
    \caption{GS vs.\ PSP. 
    (a) Empirical success rate for 200 games drawn at random from $\RandomCongestionGame(3, 3, 2)$ and another 200 games drawn at random from $\RandomGame(3,3)$, both with $\NoiseCondition = 5$. 
    (b) GS vs.\ PSP for 200 games of different sizes drawn from a $\RandomGame$ distribution, with $|P| \in \{ 2, 3, 4, 5 \}$ and $k \in \{ 2, 3, 4, 5 \}$.}
    \label{fig:GSvsPSP}
\end{figure}

\paragraph{Future experimental work}
Thus far, we have only experimented with learning pure strategy $\epsilon$-Nash equilibria.
Our experiments are small enough that we can use exhaustive search to determine ground truth. Computing mixed strategy Nash equilibria is intractable (PPAD-complete)~\cite{daskalakis2009complexity};
accordingly, estimating them remains a challenge.

Having said that, a potentially interesting statistical vs.\ computational trade-off exists that warrants further investigation 
when the goal is to estimate mixed strategy equilibria.  While fully computing $\Rationalizable_{\epsilon}(\GameTuple)$ may incur \emph{statistical\/} savings by avoiding sampling profiles that are deemed %
non-equilibria early on, iteratively computing $\Rationalizable_{\epsilon}(\GameTuple)$ comes at a \emph{computational\/} cost.

\begin{comment}

In the pure strategy equilibrium case, there are also potentially interesting tradeoffs to explore.  Eliminating strategy profiles at which \emph{any\/} agent exhibits regret, as we do, may incur some statistical savings by avoiding sampling player-strategy profile combinations that can be deemed irrelevant early on. However, eliminating strategy profiles only when multiple agents exhibit regret might also incur statistical savings by leading to earlier pruning of adjacent profiles that would not have been pruned otherwise.  Heuristics for the detection of these combinations is an interesting future research direction.
\end{comment}

The experiments we have run thus far, in simple, synthetic environments, were designed to illustrate the advantage of our algorithms in a controlled setting. In future work, we plan to study more complicated games, including large enough games so that bounds that are game-size dependent---such as those that require correction procedures, cf.\ \Cref{thm:Hoeffding}---are statistically intractable. Since our algorithms are agnostic to the noise distribution, we will also test their robustness to other forms of noise, such as multiplicative and unbounded noise.

\section{Summary and Future Directions}

This work is a contribution to the 
theoretical literature on empirical game-theoretic analysis, a methodology for the analysis of multi-agent systems.
One important future-work application for our techniques is equilibria estimation in meta-games.
Meta-games are simplified versions of intractably large games, where, instead of modeling every possible strategy an agent might implement, one analyzes a game with a substantially reduced set of strategies, each of which is usually given by a complicated algorithmic procedure (i.e., a heuristic). For example, 
one might analyze a reduced version of the game of Starcraft~\cite{tuyls2018generalised} where agents play according to higher-level strategies given by reinforcement learning algorithms: e.g., variants of AlphaGo~\cite{silver2016mastering}.
Simulation is in order; and since each run of the game can result in either agent winning (depending on various stochastic elements, including possibly the agents' strategies), one can only obtain noisy utilities. Our techniques are directly applicable to the construction of empirical meta-games, and provide guarantees on the quality of the equilibria of the corresponding simulation-based meta-games.

Our methodology is also applicable in the area of \mydef{empirical mechanism design}~\cite{vorobeychik2006empirical}, where the mechanism (game) designer wishes to optimize the rules of a game so that the ensuing equilibria achieve certain goals. For example, a network designer might want to minimize \emph{congestion} assuming selfish agents~\cite{roughgarden2002bad}.
In related work, we extend our methodology so that we can learn empirically optimal mechanisms, under appropriate assumptions~\cite{areyan2019pmduu}.

While uniform $\epsilon$-approximations allow us to estimate equilibria, not all properties of games can be estimated using approximations.
For example, while a uniform $\epsilon$-approximation allows us to estimate welfare to within $\NumberOfPlayers\epsilon$ additive error, they are not sufficient to estimate the welfare of the \emph{worst\/} (pure or mixed) Nash equilibrium, because the worst Nash equilibrium in $\GameTuple'$ could be arbitrarily better or worse than the worst equilibrium in $\GameTuple$, and even the worst $2\epsilon$-Nash equilibrium in $\GameTuple'$ could be arbitrarily worse than the worst Nash equilibrium in $\GameTuple$.  Thus quantities like the \mydef{price of anarchy} are also difficult to estimate, even given a uniform $\epsilon$-approximation.

Traditional approaches to empirical game-theoretic analysis, including past theory that employed statistical analysis, applied solely to finite games.
Likewise, our learning algorithms depend on an enumeration of all agents' utilities at all pure strategy profiles.
Our Rademacher bounds, however, do not require the finite assumption; they apply even to 
games with an infinite number of utilities, including countably infinite (or uncountably infinite, subject to standard measurability concerns) agents and/or pure strategies.
In future work, we will extend our methods to use function approximators in infinite games, and establish uniform convergence from only a finite sample, under alternative assumptions: e.g., Lipschitz utilities in games with finitely many agents and bounded, continuous pure strategy spaces.
Finally, we also hope to replace the generic McDiarmid concentration inequalities used to show the Rademacher bounds with tighter concentration inequalities, after making additional assumptions, such as uniformly bounded variance.

\begin{acks}
This work was supported in part by NSF Grant CMMI-1761546, the NEC-AIST AI Cooperative Research Laboratory, and NSF Award RI-1813444.
\end{acks}

\bibliographystyle{ACM-Reference-Format}
\bibliography{bibliography}

\pagebreak[4]

\appendix
\section*{Appendix}
\section{Proofs of Concentration Inequalities}
\label{sec:proofs}

\begin{proof}[Proof of \Cref{thm:Hoeffding}]
\if 0
By Hoeffding's inequality, the probability of failure of a single $\Utility_\PlayerIndex(\StratProfile)$ is
\begin{equation}
    1 - P \left( \left| \Expwrt{\ConditionValue \distributed \ConditionDistribution}{\Utility_\PlayerIndex (\StratProfile; \SamplePoint)} -  \frac{1}{\NumberOfSamples}\sum_{\SampleIndex=1}^\NumberOfSamples \Utility_\PlayerIndex (\SamplePoint_\SampleIndex, \StratProfile) \right| \le \epsilon \right) 
    \le 2e^{\nicefrac{-2\epsilon^2\NumberOfSamples}{\UtilityRange^2}}
    \label{eq:failureSingleH}
\end{equation}

We wish to guarantee, across all agents $\PlayerIndex \in \SetOfPlayers$ and strategy profiles $\StratProfile \in \StratProfileSpace$, a small probability that any individual $\Utility_\PlayerIndex(\StratProfile)$ deviates by more than $\epsilon$ from its expectation.
\fi

Applying a union bound to Hoeffding's inequality yields:
\begin{align*}
{
\mathbb{P} \Bigg( \sup_{(\PlayerIndex, \StratProfile) \in \Indices}
\Bigg \lvert \Expwrt{\ConditionValue \distributed \ConditionDistribution} {\Utility_\PlayerIndex (\StratProfile; \SamplePoint)} - \frac{1}{\NumberOfSamples}
\sum_{\SampleIndex=1}^\NumberOfSamples \Utility_\PlayerIndex (\StratProfile; \SamplePoint_\SampleIndex) \Bigg \rvert \le \epsilon \Bigg)} \geq 
1 - \sum_{(\PlayerIndex, \StratProfile) \in \Indices} 2e^{\nicefrac{-2 \epsilon^2 \NumberOfSamples} {\UtilityRange^2}} = 1 - 2 \abs{\Indices} e^{\nicefrac{-2 \epsilon^2\NumberOfSamples} {\UtilityRange^2}}
\end{align*}

\begin{comment}
\begin{align*}
\mathbb{P} \left( \norm{\GameTuple_\ConditionDistribution - \hat{\GameTuple}_{\Samples}}_{\!\infty} < \epsilon \right) = & \\
{
\mathbb{P} \Bigg(\Bigg\lvert \Expwrt{\ConditionValue \distributed \ConditionDistribution} {\Utility_\PlayerIndex (\StratProfile)} - \frac{1}{\NumberOfSamples}
\sum_{\SampleIndex=1}^\NumberOfSamples \Utility_\PlayerIndex (\StratProfile; \SamplePoint_\SampleIndex) \Bigg\rvert < \epsilon, \forall ( \PlayerIndex, \StratProfile ) \in \Indices \Bigg)} = & \\
{
\mathbb{P} \Bigg( \Bigg\lvert \sup_{( \PlayerIndex, \StratProfile ) \in \Indices} \Expwrt{\ConditionValue \distributed \ConditionDistribution} {\Utility_\PlayerIndex (\StratProfile)} - \frac{1}{\NumberOfSamples}
\sum_{\SampleIndex=1}^\NumberOfSamples \Utility_\PlayerIndex (\StratProfile; \SamplePoint_\SampleIndex) \Bigg\rvert < \epsilon \Bigg)} > & \\
1 - \sum_{\PlayerIndex \in \SetOfPlayers, \StratProfile \in \StratProfileSpace} 2e^{\nicefrac{-2 \epsilon^2 \NumberOfSamples} {\UtilityRange^2}} = & \\
1 - 2 \abs{\Indices} e^{\nicefrac{-2 \epsilon^2 \NumberOfSamples} {\UtilityRange^2}}
\end{align*}
\end{comment}

\noindent
Setting $\delta = 2\abs{\Indices}e^{\nicefrac{-2\epsilon^2\NumberOfSamples}{\UtilityRange^2}}$ and solving for $\epsilon$ yields the result.
\end{proof}

\begin{lemma}[Symmetrization Inequality]
\label{app:lemma:symmetrization}

Given expected game $\InducedGame{\ConditionDistribution}$, with utility function $\Utility(\cdot; \ConditionDistribution)$, sample $\Samples \distributed \ConditionDistribution^\NumberOfSamples$, and empirical utility function $\hat{\Utility}(\cdot; \Samples)$, it holds that
\[
\displaystyle \Expwrt{\Samples \distributed \ConditionDistribution^\NumberOfSamples}{\sup_{(\PlayerIndex, \StratProfile) \in \Indices}
\abs{\Utility_{\PlayerIndex} (\StratProfile; \ConditionDistribution) - \hat{\Utility}_{\PlayerIndex} (\StratProfile; \Samples)}} \leq 2 \RC {\NumberOfSamples}{\GameTuple, \Indices}{\ConditionDistribution} \enspace.
\]

\begin{comment}
\begin{enumerate}
\item $\displaystyle \Expwrt{\Samples \distributed \ConditionDistribution^\NumberOfSamples}{\norm{\GameTuple_{\ConditionDistribution} - \hat{\GameTuple}_{\Samples}}_{\!\infty}} \leq \Expwrt{\Samples \distributed \ConditionDistribution^\NumberOfSamples}{2\EMD{\NumberOfSamples}{\GameTuple}{\Samples}}$

\item $\displaystyle \Expwrt{\Samples \distributed \ConditionDistribution^\NumberOfSamples}{\norm{\GameTuple_{\ConditionDistribution} - \hat{\GameTuple}_{\Samples}}_{\!\infty}} \leq 2\RC{\NumberOfSamples}{\GameTuple}{\ConditionDistribution}$
\end{enumerate}
\end{comment}
\end{lemma}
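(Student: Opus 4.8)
The plan is to deploy the classical symmetrization argument, introducing a \emph{ghost sample} together with Rademacher signs in order to convert a one-sample deviation into a two-sample, sign-symmetrized quantity that is recognizable as twice the Rademacher average. First I would exploit the i.i.d.\ structure of $\Samples$ to rewrite the expected utility as an expectation of the empirical utility over an independent ghost sample $\smash{\Samples' = (\SamplePoint'_1, \dots, \SamplePoint'_\NumberOfSamples) \distributed \ConditionDistribution^\NumberOfSamples}$, namely $\Utility_\PlayerIndex(\StratProfile; \ConditionDistribution) = \Expwrt{\Samples'}{\hat{\Utility}_\PlayerIndex(\StratProfile; \Samples')}$, which holds because each $\Expwrt{\SamplePoint'_\SampleIndex \distributed \ConditionDistribution}{\Utility_\PlayerIndex(\StratProfile; \SamplePoint'_\SampleIndex)}$ equals $\Utility_\PlayerIndex(\StratProfile; \ConditionDistribution)$ by definition of the expected utility function. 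Substituting this inside the supremum and pulling the ghost-sample expectation outside both the absolute value and the supremum via Jensen's inequality (using convexity of $\sup$ and of $\abs{\cdot}$) yields
\[
\Expwrt{\Samples}{\sup_{(\PlayerIndex, \StratProfile) \in \Indices} \abs{\Utility_\PlayerIndex(\StratProfile; \ConditionDistribution) - \hat{\Utility}_\PlayerIndex(\StratProfile; \Samples)}} \leq \Expwrt{\Samples, \Samples'}{\sup_{(\PlayerIndex, \StratProfile) \in \Indices} \abs{\tfrac{1}{\NumberOfSamples}\sum_{\SampleIndex=1}^\NumberOfSamples \bigl(\Utility_\PlayerIndex(\StratProfile; \SamplePoint'_\SampleIndex) - \Utility_\PlayerIndex(\StratProfile; \SamplePoint_\SampleIndex)\bigr)}} \enspace.
\]

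Next I would introduce Rademacher signs $\vsigma \distributed \Rademacher^\NumberOfSamples$. The crucial observation is that, because $\SamplePoint_\SampleIndex$ and $\SamplePoint'_\SampleIndex$ are identically distributed and the pairs are independent across $\SampleIndex$, swapping the two entries within any coordinate leaves the joint law of $(\Samples, \Samples')$ unchanged; equivalently, multiplying the $\SampleIndex$-th summand $\Utility_\PlayerIndex(\StratProfile; \SamplePoint'_\SampleIndex) - \Utility_\PlayerIndex(\StratProfile; \SamplePoint_\SampleIndex)$ by an independent sign $\vsigma_\SampleIndex$ preserves the distribution of the whole inner supremum. Taking an extra expectation over $\vsigma$ is therefore free, and I may insert $\vsigma_\SampleIndex$ into each summand without changing the right-hand side above.

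Finally I would split the sign-symmetrized difference by the triangle inequality, applying $\abs{a-b} \le \abs{a} + \abs{b}$ to each index \emph{before} taking the supremum, separating a term that depends only on $\Samples'$ from one depending only on $\Samples$. Each resulting term is exactly a supremum of the form $\abs{\tfrac{1}{\NumberOfSamples}\sum_\SampleIndex \vsigma_\SampleIndex \Utility_\PlayerIndex(\StratProfile; \cdot)}$; absorbing the sign on the $\Samples$ term using $-\vsigma_\SampleIndex \overset{d}{=} \vsigma_\SampleIndex$, and using that $\Samples$ and $\Samples'$ are identically distributed, both terms have expectation equal to $\RC{\NumberOfSamples}{\GameTuple, \Indices}{\ConditionDistribution}$ by definition of the RA. Summing the two contributions produces the factor of $2$ and completes the bound.

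The step requiring the most care is the Rademacher insertion: one must argue that inserting $\vsigma_\SampleIndex$ is distribution-preserving for the \emph{supremum} itself, rather than merely for a single fixed index, which is precisely where the coordinatewise i.i.d.\ structure of $(\Samples, \Samples')$ is essential. Once that symmetry is justified, the remaining steps—Jensen's inequality, the triangle inequality, and linearity of expectation—are routine.
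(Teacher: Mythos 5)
Your proposal is correct and follows essentially the same route as the paper's proof: ghost sample plus Jensen's inequality, Rademacher sign insertion justified by the coordinatewise exchangeability of $(\Samples, \Samples')$, then the triangle inequality to split into two terms each equal to $\RC{\NumberOfSamples}{\GameTuple, \Indices}{\ConditionDistribution}$. No gaps.
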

\begin{proof}

In the proof below, suppose $\Samples' = (\SamplePoint'_1, \ldots, \SamplePoint'_\NumberOfSamples) \distributed \ConditionDistribution^{\NumberOfSamples}$.  The step labeled \textsc{See Above} follows, as $\SamplePoint'_j$ and $\SamplePoint_j$ are i.i.d., thus multiplying by a Rademacher random variable doesn't change the outer distribution over which the expectation is taken, as two equiprobable events are swapped.

\hspace{-0.25cm}\scalebox{0.9}{
\parbox{\linewidth}{
\begin{align*}
\hspace{0.75cm} & \hspace{-0.75cm} \Expwrt{\Samples}{\sup_{(\PlayerIndex, \StratProfile) \in \Indices} \abs{\Utility_{\PlayerIndex}(\StratProfile; \ConditionDistribution) - \hat{\Utility}_{\PlayerIndex} (\StratProfile; \Samples)}} \\
&= \Expwrt{\Samples}{\sup_{(\PlayerIndex, \StratProfile) \in \Indices} \abs{\Expwrt{ \! \Samples' \! }{\frac{1}{\NumberOfSamples} \! \sum_{\SampleIndex=1}^\NumberOfSamples\Utility_{\PlayerIndex}(\StratProfile ; \SamplePoint'_\SampleIndex)} - \frac{1}{\NumberOfSamples} \! \sum_{\SampleIndex=1}^\NumberOfSamples\Utility_{\PlayerIndex}(\StratProfile ; \SamplePoint_\SampleIndex)}} \hspace{-1cm} & \textsc{Definition of Expected Nfg} \\
&\leq \Expwrt{\Samples,\Samples'}{\sup_{(\PlayerIndex, \StratProfile) \in \Indices}  \abs{\frac{1}{\NumberOfSamples}\sum_{\SampleIndex=1}^\NumberOfSamples \Utility_{\PlayerIndex}(\StratProfile; \SamplePoint'_\SampleIndex) - \Utility_{\PlayerIndex}(\StratProfile; \SamplePoint_\SampleIndex)}} & \begin{tabular}{r} \textsc{Jensen's Inequality} \\ \textsc{Linearity of Expectation} \\ \end{tabular} \!\!\! \\
&= \!\! \Expwrt{\Samples,\Samples'\!,\vsigma}{\sup_{(\PlayerIndex, \StratProfile) \in \Indices}  \abs{\frac{1}{\NumberOfSamples}\sum_{\SampleIndex=1}^{\NumberOfSamples} \vsigma_\SampleIndex(\Utility_{\PlayerIndex}(\StratProfile ; \SamplePoint'_\SampleIndex) - \Utility_{\PlayerIndex}(\StratProfile; \SamplePoint_\SampleIndex))}} & \textsc{See Above} \\
&\leq \! \Expwrt{\Samples'\!, \vsigma}{\sup_{(\PlayerIndex, \StratProfile) \in \Indices}  \abs{\frac{1}{\NumberOfSamples}\sum_{\SampleIndex=1}^{\NumberOfSamples} \vsigma_\SampleIndex \Utility_{\PlayerIndex}(\StratProfile; \SamplePoint'_\SampleIndex)}} + \! \Expwrt{\Samples, \vsigma}{\sup_{(\PlayerIndex, \StratProfile) \in \Indices}  \abs{-\frac{1}{\NumberOfSamples}\sum_{\SampleIndex=1}^{\NumberOfSamples} \vsigma_\SampleIndex\Utility_{\PlayerIndex}(\StratProfile; \SamplePoint_\SampleIndex)}} \hspace{-0.25cm} & \begin{tabular}{r} \textsc{Triangle Inequality} \\ \textsc{Linearity of Expectation} \\ \end{tabular} \hspace{-0.19cm} \\ %
&= 2\RC{\NumberOfSamples}{\GameTuple, \Indices}{\ConditionDistribution} & \textsc{Definition of } \RC{\NumberOfSamples}{\GameTuple, \Indices}{\ConditionDistribution}
\end{align*}
}}\par %
\end{proof}

\begin{proof}[Proof of \Cref{thm:Rademacher}]
By \Cref{app:lemma:symmetrization}, it holds that
\[
\Expwrt{\Samples \distributed \ConditionDistribution^\NumberOfSamples}{\sup_{(\PlayerIndex, \StratProfile) \in \Indices}
\abs{\Utility_{\PlayerIndex} (\StratProfile; \ConditionDistribution) - \hat{\Utility}_{\PlayerIndex}  (\StratProfile; \Samples)}} - 2 \DERC{\NumberOfSamples}{\GameTuple, \Indices}{\vsigma}{\Samples} \le 0 \enspace ,
\]
for $\vsigma \distributed \Rademacher^\NumberOfSamples$.  Note also that, as $\Utility_{\PlayerIndex}(\StratProfile ; \SamplePoint_\SampleIndex) \in [- \nicefrac{\UtilityRange}{2}, \nicefrac{\UtilityRange}{2}]$, changing any $\SamplePoint_\SampleIndex$ results in a change of no more than $\nicefrac{3\UtilityRange}{\NumberOfSamples}$ to this difference.  Consequently, by McDiarmid's bounded difference inequality \cite{mcdiarmid1989method}, the first result holds.  
\textbf{N.B.} Standard presentations of this result, e.g. \cite{bartlett2002rademacher,mitzenmacher2017probability}, apply McDiarmid's inequality twice, on the supremum deviation and the Rademacher complexity (combining via union bound), yielding a slightly weaker bound; here we apply it only once, directly to their difference.

The second result holds similarly, though changing $\SamplePoint_\SampleIndex$ induces no change to the (non-empirical) Rademacher complexity, thus we apply the bounded difference inequality with $\nicefrac{\UtilityRange}{\NumberOfSamples}$.  The final upper bound holds via Massart's finite class inequality \citep{massart2000some}.
\end{proof}

\begin{comment}

We first show 1.  First, take $\Utility_0(\StratProfiles)_{\PlayerIndex} \doteq \inf_{x \in \support(\ConditionDistribution)} \Utility(\StratProfiles)_{\PlayerIndex}$, and take $\GameTuple' \doteq \langle \NumberOfPlayers, \StratProfiles, (\Utility - \Utility_0)(\dots) \rangle$.  We now have:
\begin{align*}
\Expwrt{\bm{x} \distributed \ConditionDistribution^m}{\norm{\GameTuple_{\ConditionDistribution} - \hat{\GameTuple}_{\bm{x}}}_{\!\infty}} &\leq \Expwrt{\bm{x} \distributed \ConditionDistribution^\NumberOfSamples}{2\EMD{\NumberOfSamples}{\GameTuple}{\bm{x}}} & \textsc{\Cref{lemma:symmetrization}} \\
 &= \Expwrt{\bm{x} \distributed \ConditionDistribution^\NumberOfSamples}{2\EMD{\NumberOfSamples}{\GameTuple'}{\bm{x}}} & \textsc{\Cref{lemma:offset}} \\
\end{align*}

We first show a symmetrization inequality which is used in each case.  

Now, note that $\frac{1}{\NumberOfSamples}\sum_{\SampleIndex=1}^\NumberOfSamples (-1)^\SampleIndex \Utility(\StratProfiles ; \bm{x}'')_{p,q} = \frac{1}{\NumberOfSamples}\sum_{\SampleIndex=1}^\NumberOfSamples (-1)^\SampleIndex (\Utility(\StratProfiles ; \bm{x}'')_{p,q} + \Utility'(\StratProfiles)_{p,q}$
\end{comment}

\newcommand{\NumberOfStrategies}{\abs{\bm{S}}}

\newcommand{\del}{0.05}
\newcommand{\na}{100}
\newcommand{\ns}{10000}

\section{Rademacher Bounds for Games with Factored Noise}
\label{sec:comparingBounds}

In this appendix, we design a class of games, \mydef{games with factored noise}, for which Rademacher bounds perform well.  Crucially, this performance holds regardless of whether or not it is known \emph{a priori\/} that a game is in this class.  Consequently, Rademacher bounds are an attractive choice when we suspect a game is well-behaved, but have insufficient \emph{a priori\/} knowledge to prove any particular property that would imply strong statistical bounds, as is standard in EGTA.

The following theorem upper bounds the RA in games with factored noise.  Although our bound is rather loose, it is often substantially lower than the worst-case estimate of \Cref{thm:Rademacher}.  While computing this bound requires knowledge of the structure of the game, computing the 1-ERA does not require such knowledge; moreover, we would expect to see similar behavior when the 1-ERA is used in place of the RA, as the 1-ERA is tightly concentrated about its expectation.

\begin{theorem}[Rademacher Averages of Games with Factored Noise]
\label{thm:ra-factored}
Consider conditional normal-form game $\ConditionalGame{\ConditionSpace}$ together with distribution $\ConditionDistribution$ and index set $\smash{\Indices \subseteq \SetOfPlayers \times \StratProfileSpace}$ s.t.\ %
$\smash{\sup_{(\PlayerIndex, \StratProfile) \in \Indices} \abs{\Utility_{\PlayerIndex}(\StratProfile; \ConditionDistribution)} \leq a_0}$.  Suppose further \mydef{factoring functions} $\eta, \phi$ such that for all $\ConditionValue \in \ConditionSpace$ and $(\PlayerIndex, \StratProfile) \in \Indices$,
\[
\Utility_\PlayerIndex(\StratProfile; \ConditionValue) = \Utility_{\PlayerIndex}(\StratProfile; \ConditionDistribution) + \sum_{i=1}^n \eta_i\bigl(\phi_i(\PlayerIndex, \StratProfile); \ConditionValue\bigr) \enspace,
\]
and for all $i \in 1, \dots, n$, $\Image(\eta_i) \subseteq [-a_i, a_i]$ and $\abs{\Image(\phi_i)} \leq b_i$.  Then
\[
\RC{\NumberOfSamples}{\GameTuple, \Indices}{\rand{D}} \leq \frac{a_0}{\sqrt{\NumberOfSamples}} + \sum_{i=1}^n a_i\min\left(1, \mathsmaller{\sqrt{\frac{2\ln(b_i)}{\NumberOfSamples}}}\right) \enspace.
\]
\end{theorem}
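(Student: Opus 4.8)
The plan is to expand the RA directly using the factored decomposition of the conditional utilities, and then bound the contribution of each factor separately by the triangle inequality. First I would substitute $\Utility_\PlayerIndex(\StratProfile; \SamplePoint_\SampleIndex) = \Utility_\PlayerIndex(\StratProfile; \ConditionDistribution) + \sum_{i=1}^n \eta_i(\phi_i(\PlayerIndex,\StratProfile); \SamplePoint_\SampleIndex)$ into the definition $\RC{\NumberOfSamples}{\GameTuple,\Indices}{\ConditionDistribution} = \Expwrt{\Samples,\vsigma}{\sup_{(\PlayerIndex,\StratProfile)\in\Indices} \lvert \frac{1}{\NumberOfSamples}\sum_\SampleIndex \vsigma_\SampleIndex \Utility_\PlayerIndex(\StratProfile; \SamplePoint_\SampleIndex)\rvert}$. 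Since the expected-utility term $\Utility_\PlayerIndex(\StratProfile;\ConditionDistribution)$ does not depend on $\SamplePoint_\SampleIndex$, it factors out of the inner average as $\Utility_\PlayerIndex(\StratProfile;\ConditionDistribution)\cdot\frac{1}{\NumberOfSamples}\sum_\SampleIndex \vsigma_\SampleIndex$. Applying the triangle inequality inside the absolute value, together with the subadditivity of the supremum (i.e.\ $\sup(f+g)\le \sup f + \sup g$), then splits the RA into one ``centered'' term plus $n$ ``noise'' terms, each of which I bound in turn.

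For the centered term, I would pull out $\sup_{(\PlayerIndex,\StratProfile)\in\Indices}\lvert\Utility_\PlayerIndex(\StratProfile;\ConditionDistribution)\rvert\le a_0$, leaving $\Expwrt{\vsigma}{\lvert\frac{1}{\NumberOfSamples}\sum_\SampleIndex \vsigma_\SampleIndex\rvert}$, which by Jensen's inequality is at most $\sqrt{\Expwrt{\vsigma}{(\frac{1}{\NumberOfSamples}\sum_\SampleIndex \vsigma_\SampleIndex)^2}} = \sqrt{\nicefrac{1}{\NumberOfSamples}}$; this yields the $\nicefrac{a_0}{\sqrt{\NumberOfSamples}}$ summand. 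For each noise term I would exploit that $\eta_i(\phi_i(\PlayerIndex,\StratProfile);\cdot)$ depends on the index $(\PlayerIndex,\StratProfile)$ only through $\phi_i(\PlayerIndex,\StratProfile)$, which ranges over at most $b_i$ distinct values. Hence the supremum over $\Indices$ collapses to a supremum over the at most $b_i$ vectors $(\eta_i(v;\SamplePoint_1),\dots,\eta_i(v;\SamplePoint_\NumberOfSamples))$ indexed by $v\in\Image(\phi_i)$, each of Euclidean norm at most $a_i\sqrt{\NumberOfSamples}$ since $\lvert\eta_i\rvert\le a_i$. Applying Massart's finite class inequality to this finite set yields the bound $a_i\sqrt{\nicefrac{2\ln(b_i)}{\NumberOfSamples}}$, while the competing trivial bound $\lvert\frac{1}{\NumberOfSamples}\sum_\SampleIndex \vsigma_\SampleIndex\eta_i(v;\SamplePoint_\SampleIndex)\rvert\le a_i$ supplies the other argument of the $\min$. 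Summing the centered bound and the $n$ noise bounds gives the claimed inequality.

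The main obstacle I anticipate is the absolute value in the 1-ERA/RA definition: Massart's lemma is usually stated for $\sup$ rather than $\sup\lvert\cdot\rvert$, so handling the absolute value rigorously requires augmenting the finite class with the negated vectors, doubling its size to $2b_i$ and strictly producing $\sqrt{\nicefrac{2\ln(2b_i)}{\NumberOfSamples}}$. This introduces only a benign additive $\ln 2$ inside the logarithm, which can be absorbed into the stated bound or eliminated by a symmetry argument on $\vsigma$. I would also take care to confirm that the subadditivity-of-supremum step remains valid when applied simultaneously across all $n+1$ terms, since each supremum may be realized at a different index $(\PlayerIndex,\StratProfile)$; this is exactly what justifies bounding each term by its own worst case independently.
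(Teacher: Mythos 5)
Your proposal is correct and follows essentially the same route as the paper's proof: expand the RA via the factored decomposition, split by the triangle inequality, bound each noise factor with Massart's finite class inequality over its at most $b_i$ realizations, and bound the centered term by $a_0 \Expwrt{\vsigma}{\lvert \frac{1}{\NumberOfSamples}\sum_{\SampleIndex} \vsigma_\SampleIndex \rvert} \le \nicefrac{a_0}{\sqrt{\NumberOfSamples}}$ (the paper phrases this as the expected displacement of a unit random walk; your Jensen argument gives the identical bound). Your remark that the absolute value in the RA strictly forces $\ln(2b_i)$ in place of $\ln(b_i)$ when invoking Massart is a genuine subtlety the paper's proof glosses over---its own figure quietly uses $2b_i$ inside the logarithms---but, as you say, it is benign.
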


\begin{proof}
This result follows by analyzing the contribution of each factor to the Rademacher complexity via Massart's finite class inequality (we use the form given by \citet[][Theorem~3.3]{boucheron2005theory}).  In detail:

\hspace{-0.25cm}\scalebox{0.93}{\parbox{\linewidth}{
\begin{align*}
\RC{\NumberOfSamples}{\GameTuple, \Indices}{\rand{D}} \hspace{-1.5cm} & \hspace{1.5cm} = \Expwrt{\vsigma,\Samples}{\sup_{(\PlayerIndex, \StratProfile) \in \Indices} \abs{\frac{1}{\NumberOfSamples}\sum_{\SampleIndex=1}^{\NumberOfSamples} \vsigma_\SampleIndex \Utility_\PlayerIndex(\StratProfile; \SamplePoint_\SampleIndex) }} & \textsc{Definition of $\textcomp{R}$} \\
 &= \Expwrt{\vsigma,\Samples}{\sup_{(\PlayerIndex, \StratProfile) \in \Indices} \abs{\frac{1}{\NumberOfSamples}\sum_{\SampleIndex=1}^{\NumberOfSamples} \vsigma_\SampleIndex\Utility_{\PlayerIndex}(\StratProfile,\ConditionDistribution) + \sum_{i=1}^n\vsigma_\SampleIndex \eta_i(\phi_i(\PlayerIndex, \StratProfile); \SamplePoint_j)}} & \textsc{Factoring of $\GameTuple$} \\
 &\leq \Expwrt{\vsigma,\Samples}{\sup_{(\PlayerIndex, \StratProfile) \in \Indices} \abs{\frac{1}{\NumberOfSamples}\sum_{\SampleIndex=1}^{\NumberOfSamples} \vsigma_\SampleIndex\Utility_{\PlayerIndex}(\StratProfile,\ConditionDistribution)}} + \sum_{i=1}^n \Expwrt{\vsigma,\Samples}{\sup_{(\PlayerIndex, \StratProfile) \in \Indices}\abs{\frac{1}{\NumberOfSamples}\sum_{\SampleIndex=1}^{\NumberOfSamples} \vsigma_\SampleIndex \eta_i(\phi_i(\PlayerIndex, \StratProfile); \SamplePoint_j)}} & \textsc{Triangle Inequality} \\
 &\leq \Expwrt{\vsigma,\Samples}{\sup_{(\PlayerIndex, \StratProfile) \in \Indices} \abs{\frac{1}{\NumberOfSamples}\sum_{\SampleIndex=1}^{\NumberOfSamples} \vsigma_\SampleIndex\Utility_{\PlayerIndex}(\StratProfile,\ConditionDistribution)}} + \sum_{i=1}^n a_i\min\left(1, \mathsmaller{\sqrt{\frac{2\ln(b_i)}{\NumberOfSamples}}}\right) & \hspace{-1cm}\textsc{Massart's Inequality} \\
 &\leq \frac{a_0}{\sqrt{\NumberOfSamples}} + \sum_{i=1}^{n} a_i\min\left(1, \mathsmaller{\sqrt{\frac{2\ln(b_i)}{\NumberOfSamples}}}\right) & \textsc{See Below}
\end{align*}
}}

\noindent
To see the final step, note that $\sup_{(\PlayerIndex, \StratProfile) \in \Indices} \abs{\frac{1}{\NumberOfSamples}\sum_{\SampleIndex=1}^{\NumberOfSamples} \vsigma_\SampleIndex\Utility_{\PlayerIndex}(\StratProfile; \ConditionDistribution)}$ is always realized by $(\PlayerIndex, \StratProfile)$ with the largest or smallest expectation (depending on whether $\sum_{\SampleIndex=1}^{\NumberOfSamples}\vsigma_\SampleIndex$ is positive or negative), the absolute value of which does not exceed $a_0$.  As such, the absolute value can be upper-bounded by the expected distance travelled in a unit random walk scaled by $\nicefrac{a_0}{\NumberOfSamples}$, which does not exceed $\nicefrac{a_0}{\sqrt{\NumberOfSamples}}$.
\end{proof}

As an example, we construct a game with noise that is factored into five additive components, the first three with scale 1, and the last two with scale $\frac{1}{2}$; thus $a = \langle 1, 1, 1, \frac{1}{2}, \frac{1}{2} \rangle$.  The game is enormous; the number of agents $\NumberOfPlayers$ is varied, and each agent has the same set of $\NumberOfStrategies = 100$ strategies available.  The first factor is global noise, independent of the agent or strategy profile; thus $\phi_1(\PlayerIndex, \StratProfile)$ is constant and $b_1 = 1$.  The second factor is agent noise, so ignores the strategy profile; thus $\phi_2(\PlayerIndex, \StratProfile) = \PlayerIndex$ and $b_2 = \NumberOfPlayers$.  The third factor is independent per-agent-action noise, which ignores all but the actions of a particular agent; thus $\phi_3(\PlayerIndex, \StratProfile) = \StratProfile_\PlayerIndex$ and $b_3 = \NumberOfStrategies$.  Finally, the fourth and fifth factors depend on the strategy profile (but not the agent index), and both the strategy profile and agent index, respectively; thus $b_4 = \NumberOfStrategies^{\NumberOfPlayers}$ and $b_5 = \NumberOfPlayers\NumberOfStrategies^{\NumberOfPlayers}$.  

In Figure~\ref{fig:factored-games}, we plot the Hoeffding and Rademacher bounds for this game.
We see that the former outperforms the latter for small $\NumberOfPlayers$, but near $\NumberOfPlayers = 35$, the trend reverses.  To be conservative, we plot the upper bound on the Rademacher average (derived in Theorem~\ref{thm:ra-factored}) using the constant (3) of the 1-ERA (see Theorem~\ref{thm:Rademacher}).

\begin{figure}[htbp]

\begin{multicols}{2} %

\begin{tikzpicture}

\begin{axis}[
        no markers,
        domain=1:100,
        samples=25,smooth,
        legend pos=south east,
        legend style={draw=none,fill=none,font=\scriptsize},
        legend cell align={left},
        width=3.3in,
        height=2.6in,
        xticklabel style={
            font=\small
        },
        yticklabel style={
            /pgf/number format/fixed,
            /pgf/number format/precision=3,
            font=\small
        },
        scaled y ticks=false,
        xlabel={\small{}Number of Agents $\NumberOfPlayers$},
        xmin=0,xmax=100,
        ymin=0,ymax=1.6,
        ytick distance=0.5,minor y tick num=4,
        xtick distance=10,minor x tick num=1,
        grid=both,
        grid style={line width=.4pt,opacity=0.18},major grid style={line width=.5pt,opacity=0.55},
    ]

\addplot[style=solid,color=blue,domain=1:18]{10 * sqrt((ln(2) + ln(x) + x * ln(\na) - ln(\del)) / (2 * \ns))};
\addplot[style=solid,color=blue,domain=18:100,forget plot=true]{10 * sqrt((ln(2) + ln(x) + x * ln(\na) - ln(\del)) / (2 * \ns))};
\addlegendentry{Hoeffding: $\norm{\bm{a}}_{\!1} \sqrt{\frac{2\ln \bigl( \nicefrac{2\NumberOfPlayers \NumberOfStrategies^{\NumberOfPlayers}}{\delta} \bigr)}{m}}$};

\addplot[style=solid,color=red,domain=1:12]{2 * (1 / sqrt(\ns) + sqrt(2 * ln(2) / \ns) + sqrt(2 * ln(2 * \na) / \ns) + sqrt(2 * ln(2 * x) / \ns) + 0.5 * sqrt(2 * x * ln(2 * \na) / \ns) + 0.5 * sqrt(2 * (ln(2 * x) + x * ln(\na)) / \ns)) + 10 * 3 * sqrt(-ln(\del) / (2 * \ns))};
\addplot[style=solid,color=red,domain=12:100,forget plot=true,samples=15]{2 * (1 / sqrt(\ns) + sqrt(2 * ln(2) / \ns) + sqrt(2 * ln(2 * \na) / \ns) + sqrt(2 * ln(2 * x) / \ns) + 0.5 * sqrt(2 * x * ln(2 * \na) / \ns) + 0.5 * sqrt(2 * (ln(2 * x) + x * ln(\na)) / \ns)) + 10 * 3 * sqrt(-ln(\del) / (2 * \ns))};
\addlegendentry{Rademacher: $2\RC{m}{\GameTuple, \SetOfPlayers \times \StratProfileSpace}{\rand{D}} + 3\norm{\bm{a}}_{\!1} \sqrt{\frac{2\ln \left( \nicefrac{1}{\delta} \right)}{m}}$};

\end{axis}
\end{tikzpicture}

\vfill\columnbreak

\begin{align*}
\bm{a} &= \langle 1, 1, 1, \mathsmaller{\frac{1}{2}}, \mathsmaller{\frac{1}{2}} \rangle & \\
\bm{b} &= \langle 1, \NumberOfPlayers, \NumberOfStrategies, \NumberOfStrategies^{\NumberOfPlayers}, \NumberOfPlayers \NumberOfStrategies^{\NumberOfPlayers} \rangle & \\
\end{align*}

\vspace{-0.5cm}

\begin{align*}
\ \ \ \ \ \ \ \Utility_\PlayerIndex(\StratProfile; \SamplePoint) &= \Utility_{\PlayerIndex}(\StratProfile; \ConditionDistribution) + \mathsmaller{\sum_{i=1}^5} \eta_i(\phi_i(\StratProfile, \PlayerIndex); \SamplePoint) & \\
 &= \Utility_{\PlayerIndex}(\StratProfile; \ConditionDistribution) + \eta_1(\phi_1; \SamplePoint) & \\
 & \ \ \ + \eta_2(\phi_2(p); \SamplePoint) + \eta_3(\phi_3(\StratProfile_p); \SamplePoint) & \\
 & \ \ \ + \mathsmaller{\frac{1}{2}}\eta_4(\phi_4(\StratProfile); \SamplePoint) + \mathsmaller{\frac{1}{2}}\eta_5(\phi_5(\StratProfile, \PlayerIndex); \SamplePoint) & \\
\end{align*}

\vfill
\end{multicols}

\vspace{-0.25cm}

\caption{Hoeffding vs.\ Rademacher Bounds for a game with factored noise with (variable) $\NumberOfPlayers$ agents, $\NumberOfStrategies = \na$ actions per agent, $\NumberOfSamples = \na$ samples, and $\delta = \del$.}
\label{fig:factored-games}
\end{figure}

\end{document}